\newtheorem{definition}{Definition}
\newtheorem{theorem}{Theorem}
\title{Two-User Gaussian Interference Channel with Finite Constellation Input and FDMA}
\begin{document}

\author{
\authorblockN{G. Abhinav}
\authorblockA{Dept. of ECE, Indian Institute of Science \\
Bangalore 560012, India\\
Email: abhig\_88@ece.iisc.ernet.in
}
\and
\authorblockN{B. Sundar Rajan}
\authorblockA{Dept. of ECE, Indian Institute of Science, \\Bangalore 560012, India\\
Email: bsrajan@ece.iisc.ernet.in
}
}

\maketitle
\thispagestyle{empty}	

\begin{abstract}
In the two-user Gaussian Strong Interference Channel (GSIC) with finite constellation inputs, it is known that relative rotation between the constellations of the two users enlarges the Constellation Constrained (CC) capacity region. In this paper, a metric for finding the approximate angle of rotation (with negligibly small error) to maximally enlarge the CC capacity for the two-user GSIC is presented. In the case of Gaussian input alphabets with equal powers for both the users and the modulus of both the cross-channel gains being equal to unity, it is known that the FDMA rate curve touches the capacity curve of the GSIC. It is shown that, with unequal powers for both the users also, when the modulus of one of the cross-channel gains being equal to one and the modulus of the other cross-channel gain being greater than or equal to one, the FDMA rate curve touches the capacity curve of the GSIC. On the contrary, it is shown that, under finite constellation inputs, with both the users using the same constellation, the FDMA rate curve strictly lies within (never touches) the enlarged CC capacity region throughout the strong-interference regime. This means that using FDMA it is impossible to go close to the CC capacity. It is well known that for the Gaussian input alphabets, the FDMA inner-bound, at the optimum sum-rate point, is always better than the simultaneous-decoding inner-bound throughout the weak-interference regime. For a portion of the weak interference regime, it is shown that with identical finite constellation inputs for both the users, the simultaneous-decoding inner-bound, enlarged by relative rotation between the constellations, is strictly better than the FDMA inner-bound. 

\end{abstract}	
\section{INTRODUCTION AND PRELIMINARIES}
\label{sec1}
The Gaussian Interference channel (\textit{GIC}) model \cite{GaY}, is shown in Fig \ref{fig:AWGNIC}. User-1 intends to communicate with Receiver-1 at rate $R_1$ and User-2 with Receiver-2 at rate $R_2$, with both the users interfering with each other at their respective receivers as dictated by the channel gains. Channel gain from User-$i$ to Receiver-$j$ is denoted by $h_{ij}$. The users are equipped with complex signal constellations ${\cal S}_1$ and ${\cal S}_2$ of cardinality $M_1$ and $M_2$, with average power constraints $P_1$ and $P_2$ respectively. Symbol level synchronization between the users is assumed. The signals obtained at the receivers are given by
\begin {align}
\label{eqnset1}
Y_1 = h_{11}X_1 + h_{21}X_2 + N_1\\
Y_2 = h_{12}X_1 + h_{22}X_2 + N_2
\end {align}
where $X_1$ $\in$ ${\cal{S}}_1$, $X_2$ $\in$ ${\cal S}_2$, $N_1$ $\sim$ ${\cal CN}(0,\sigma_1^2)$, $N_2$ $\sim$ ${\cal CN}(0,\sigma_2^2)$. (${\cal CN}(0,\sigma_j^2)$ represents circularly symmetric complex Gaussian noise with mean $0$ and variance $\sigma_j^2$, ($j$=$1,2$)).
\begin{figure}[htbp]
\centering
\includegraphics[totalheight=1.5in,width=2in]{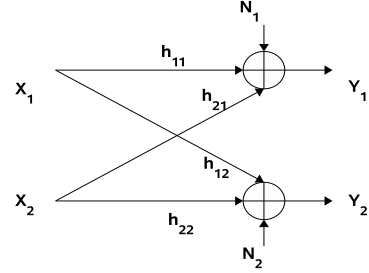}
\caption{GIC Model}	
\label{fig:AWGNIC}	
\end{figure}

Without loss of generality, throughout this paper, we assume $h_{11}$=$h_{22}$=$1$ 
\cite{Car}.

Define
\begin{align}
\nonumber
&SNR_1 = \frac{P_1}{\sigma_1^2},\\
\nonumber
&INR_2 = \frac{|h_{12}|^2P_1}{\sigma_2^2},\\
\nonumber
&SNR_2 = \frac{P_2}{\sigma_2^2},~ \mbox{and}\\
\nonumber
&INR_1 = \frac{|h_{21}|^2P_2}{\sigma_1^2}
\end{align} where, $SNR_i$ and $INR_i$ ($i$=$1,2$) denote the intended-signal to noise power ratio and the interference to noise power ratio at Receiver-$i$ respectively.

\begin{definition} [\cite{HS,CoE,GaY} ]\footnote{The conventional definition for strong interference treats the very strong interference as a special case; in this paper we exclude very strong interference from strong interference.}
\label{def1}
A GIC is said to be in strong interference when
\begin {align}
\nonumber
&SNR_1 \leq INR_2, \\
\label{eqnset3_a_1}
&SNR_2 \leq INR_1\\
\nonumber
&\hspace{-0.1cm} \mbox{and, atleast one of the following two conditions is satisfied}\\
\nonumber
&SNR_1 > \left (\frac{INR_2}{1+SNR_2} \right)  \mbox{,} \\
\label{eqnset3_a}
&SNR_2 > \left (\frac{INR_1}{1+SNR_1} \right).
\end {align}
\end{definition}

\begin{definition}\footnote{In the literature, different definitions for weak interference regime are available. In this paper, we stick to our definition.}
 A GIC is said to be in weak interference when atleast one of the conditions in (\ref{eqnset3_a_1}) is violated.
\end{definition}

For the two-user Gaussian strong interference channel (GSIC), the capacity region (in bits per channel use) is given by \cite{HS}
\begin {align}
\nonumber
R_1 &\leq log_2\left( 1 + \frac {P_1}{\sigma_1^2} \right)\\
\nonumber
R_2 &\leq log_2 \left( 1 + \frac {P_2}{\sigma_2^2} \right)\\
\label{eqnset5}
R_1 + R_2 &\leq min \left \{ ~log_2 \left( 1 + \frac{P_1 + |h_{21}|^2P_2}{\sigma_1^2} \right),\hspace{30cm}\right \}\\
\nonumber
&\hspace{-30cm}\left \{ \hspace{32cm} log_2 \left( 1 + \frac{|h_{12}|^2P_1 + P_2}{\sigma_2^2} \right) \right \}.
\end{align} 
Gaussian codebooks achieve the capacity in the GSIC. Though this capacity region provides insights into the achievable rate pairs ($R_1, R_2$) in an information theoretic sense, it fails to provide insight on the achievable rate pairs when we consider finitary restrictions on the input alphabets and analyze some real world practical signal constellations like QAM and PSK etc.

In this work we assume, that the two \textit{independent} users use finite complex constellations with uniform distribution over its elements. Under the above assumptions, the maximum achievable rate is referred to as the Constellation Constrained (CC) capacity \cite{Big}. The CC capacity was analyzed for the Gaussian-MAC (G-MAC) in \cite{HaR} and for the broadcast channel in \cite{DeR}. Recently, we came to know of the work on the CC capacity for the GSIC in \cite{KnS} in which capacity maximization for the GSIC by rotation of signal set is studied and it has been shown that only relative angle of rotation between the constellations matter. The optimum angle of rotation was computed numerically in \cite{KnS}. 

The contributions of this paper are as follows:
\begin{itemize}
\item We present a metric to obtain the approximate angle of rotation (with negligibly small error) required for maximal enlargement of the CC capacity region for the two-user GSIC that can be computed with considerable ease.
\item When the User-Receiver pair use the Frequency Division Multiple Access (FDMA) scheme, it is known that the rate curve when Gaussian alphabets are used, with $P_1$=$P_2$, touches the capacity curve of the GSIC when $SNR_1$=$SNR_2$=$INR_1$=$INR_2$ \cite{GaY}. We show that, (for the Gaussian alphabet case), with $P_1$ not necessarily equal to $P_2$, the FDMA rate curve touches the capacity curve of the GSIC, also when $INR_2$$\geq$$ SNR_1$ and $INR_1$=$SNR_2$ and when $INR_2$=$SNR_1$ and $INR_1$$\geq$$ SNR_2$. On the contrary, in the finite constellation case, with ${\cal S}_1$=${\cal S}_2$, we show that the FDMA rate curve always lies strictly inside (never touches) the CC capacity region of the GSIC.

\item It is known that, with $P_1$=$P_2$ and $INR_2$=$INR_1<SNR_1$=$SNR_2$, for the Gaussian alphabet case, the FDMA inner-bound, at the optimum sum-rate point, is better than the simultaneous-decoding\footnote{Throughout this paper, the simultaneous-decoding we refer to is the version of simultaneous-decoding that doesn't require the message of each user to be correctly decoded at the unintended receiver, as mentioned in \cite{GaY}.} inner-bound \cite{GaY}. We show that, with $P_1$ not necessarily equal to $P_2$, throughout the weak-interference regime, the FDMA inner-bound, at the optimum sum-rate point, is better than the simultaneous-decoding inner-bound for the Gaussian input, whereas, for the finite constellation case, with ${\cal S}_1$=${\cal S}_2$, for some portion of the weak interference regime, the simultaneous-decoding inner-bound is strictly better than the FDMA inner-bound. \end{itemize}

\textit{\textbf{Notations:}} For a random variable $X$ which takes value from the set $\cal S$, we assume some ordering of its elements and use $x^i$ to represent the $i$-th element of $\cal S$. Realization of the random variable X is denoted as $x$. Absolute value of a complex number $x$ is denoted by $|x|$ and $E[X]$ denotes the expectation of the random variable $X$. 
All the logarithms in this paper are evaluated for base-2.
\section{A METRIC FOR MAXIMAL CAPACITY ENLARGEMENT}
\label{sec2}
Throughout this section we consider two-user GSIC. The CC capacity for the GSIC, is given by \cite{HaK}
\begin {align}
\label{eqnset4}
\nonumber
R_1 &\leq I(X_1;Y_1|X_2)\\
\nonumber
R_2 & \leq I(X_2;Y_2|X_1)\\
R_1 + R_2 & \leq min \lbrace I(X_1,X_2;Y_1), I(X_1,X_2;Y_2)\rbrace.
\end{align} The above mutual informations can be easily evaluated as in (\cite{Big,HaR}) and are shown in (\ref{eqnset6_1}), (\ref{eqnset6_2}), (\ref{eqnset6_3}), and (\ref{eqnset6_5}) (at the top of the next page).

\begin{figure*}
\scriptsize
\begin{align}
\centering
 \label {eqnset6_1}
&I(X_1;Y_1|X_2)= logM_1 - \frac{1}{M_1} \sum_{k_1 = 0}^{M_1-1} E_{N_1} \left[log \left( \sum_{i_1=0}^{M_1-1} exp \left(- \left( \frac{|N_1 + \left( x_1^{k_1}-x_1^{i_1} \right)|^2-|N_1|^2}{\sigma_1^2} \right) \right)\right)\right]\\
\centering
\label {eqnset6_2}
&I(X_2;Y_2|X_1)= logM_2 - \frac{1}{M_2} \sum_{k_2 = 0}^{M_2-1} E_{N_2} \left[log \left( \sum_{i_2=0}^{M_2-1} exp \left(- \left( \frac{|N_2 + \left( x_2^{k_2}-x_2^{i_2} \right)|^2-|N_2|^2}{\sigma_2^2} \right) \right)\right)\right]\\
\centering
\label {eqnset6_3}
&I_1~\triangleq~I(X_1,X_2;Y_1)= log(M_1M_2) - \frac{1}{M_1M_2} \sum_{k_1 = 0}^{M_1-1} \sum_{k_2 = 0}^{M_2-1} E_{N_1} \left[log \left( \sum_{i_1=0}^{M_1-1} \sum_{i_2=0}^{M_2-1} exp \left(- \frac{|N_1 + \left( x_1^{k_1}-x_1^{i_1} \right) + h_{21}e^{j\theta} \left( x_2^{k_2}-x_2^{i_2} \right)|^2-|N_1|^2}{\sigma_1^2} \right)\right)\right]\\
\label {eqnset6_4}
&~~~~~~~~~~~~~~~~~~~\geq log(M_1M_2) - log~e - \frac{1}{M_1M_2} \sum_{k_1 = 0}^{M_1-1} \sum_{k_2 = 0}^{M_2-1} log \left( \frac{1}{2} \sum_{i_1=0}^{M_1-1} \sum_{i_2=0}^{M_2-1} exp \left(- \frac{| \left( x_1^{k_1}-x_1^{i_1} \right) + h_{21}e^{j\theta} \left( x_2^{k_2}-x_2^{i_2} \right)|^2}{2\sigma_1^2} \right)\right)~ \triangleq ~I_1 ^\prime\\
\label {eqnset6_5}
&I_2~\triangleq~I(X_1,X_2;Y_2)= log(M_1M_2) - \frac{1}{M_1M_2} \sum_{k_1 = 0}^{M_1-1} \sum_{k_2 = 0}^{M_2-1} E_{N_2} \left[log \left( \sum_{i_1=0}^{M_1-1} \sum_{i_2=0}^{M_2-1} exp \left(- \frac{|N_2 + h_{12} \left( x_1^{k_1}-x_1^{i_1} \right) + e^{j\theta} \left( x_2^{k_2}-x_2^{i_2} \right)|^2-|N_2|^2}{\sigma_2^2} \right)\right)\right]\\
\label {eqnset6_6}
&~~~~~~~~~~~~~~~~~~~~\geq log(M_1M_2) - log~e - \frac{1}{M_1M_2} \sum_{k_1 = 0}^{M_1-1} \sum_{k_2 = 0}^{M_2-1} log \left( \frac{1}{2} \sum_{i_1=0}^{M_1-1} \sum_{i_2=0}^{M_2-1} exp \left(- \frac{|h_{12} \left( x_1^{k_1}-x_1^{i_1} \right) + e^{j\theta} \left( x_2^{k_2}-x_2^{i_2} \right)|^2}{2\sigma_2^2} \right)\right)~ \triangleq ~I_2 ^\prime
\end{align}
\hrule
\end{figure*}

For channel gains taking complex values, since $N_1$ and $N_2$ are circularly symmetric Gaussian noise, rotation of either ${\cal S}_1$ or ${\cal S}_2$ by any arbitrary angle doesn't change the values in (\ref{eqnset6_1}) and (\ref{eqnset6_2}), where as the values in (\ref{eqnset6_3}) and (\ref{eqnset6_5}) do change. Hence, the CC capacity region does change, providing us with an option for maximally expanding it \cite{KnS}. Since, only relative angle of rotation between the constellations matter \cite{KnS}, we shall rotate only ${\cal S}_2$ and denote the angle of rotation as $\theta$.

Let ${\cal S}_{sum_1} = \lbrace x_1 + h_{21}x_2 | \forall x_1 \in {\cal S}_1, x_2 \in {\cal S}_2^\prime \rbrace$ and ${\cal S}_{sum_2} = \lbrace h_{12}x_1 + x_2 | \forall x_1 \in {\cal S}_1, x_2 \in {\cal S}_2^\prime \rbrace$, where ${\cal S}_2^\prime$ can be either an unrotated or a rotated version of ${\cal S}_2$. Define ${\varphi}_1 : {\cal S}_1 \times {\cal S}_2^\prime \longrightarrow {\cal S}_{sum_1}$ and ${\varphi}_2 : {\cal S}_1 \times {\cal S}_2^\prime \longrightarrow {\cal S}_{sum_2}$. The following theorem gives the metric for choosing an approximate angle of rotation to maximally enlarge the CC capacity region which, unlike in \cite{KnS}, doesn't involve numerical computation.
\begin{figure*}
\scriptsize
\begin{align}
\nonumber
& \theta_{opt} = arg~ \min_{\theta \in (0,2\pi)}~max \left\{ ~ \sum_{k_1 = 0}^{M_1-1} \sum_{k_2 = 0}^{M_2-1} log \left( \sum_{i_1=0}^{M_1-1} \sum_{i_2=0}^{M_2-1} exp \left(- \frac{| \left( x_1^{k_1}-x_1^{i_1} \right) + h_{21}e^{j\theta} \left( x_2^{k_2}-x_2^{i_2} \right)|^2}{2\sigma_1^2} \right)\right),~~~~~~~~~~~~~~~~~~~~~~~~~~~~~~~~~~~~~~~~~~~~~~~~~~~~~~~~~~~~~~~~~~~~~~~~~~~~~~~~~\right\}\\
\label {eqnset6_7}
&\hspace{-15 cm} \left\{\hspace{20cm} \sum_{k_1 = 0}^{M_1-1} \sum_{k_2 = 0}^{M_2-1} log \left( \sum_{i_1=0}^{M_1-1} \sum_{i_2=0}^{M_2-1} exp \left(- \frac{|h_{12} \left( x_1^{k_1}-x_1^{i_1} \right) + e^{j\theta} \left( x_2^{k_2}-x_2^{i_2} \right)|^2}{2\sigma_2^2} \right)\right)~\right\}
\end{align}
\hrule
\end{figure*}
\begin{theorem}
 \label{thm1}
Given the constellation pair (${\cal S}_1$,${\cal S}_2$) for the users, an approximate angle of rotation $\theta_{opt}$ for ${\cal S}_2$ required to maximally enlarge the CC capacity region of the GSIC at high power levels is given by (\ref{eqnset6_7}).
\end{theorem}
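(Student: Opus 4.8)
The plan is to reduce the region-enlargement problem to a one-dimensional min--max over $\theta$, and then to replace the exact noise-averaged objective by the noise-free surrogate that appears in (\ref{eqnset6_7}), arguing that the substitution is asymptotically exact at high power. First I would invoke the fact recorded just before the theorem: rotating $\mathcal{S}_2$ leaves $I(X_1;Y_1|X_2)$ and $I(X_2;Y_2|X_1)$ unchanged, so among the three constraints defining the region in (\ref{eqnset4}) only the sum-rate bound $R_1+R_2\le\min\{I_1,I_2\}$ depends on $\theta$. Because the two individual caps are fixed, the regions obtained for different $\theta$ are nested and the largest one is the one with the largest sum-rate face; hence maximal enlargement is exactly $\max_\theta\min\{I_1(\theta),I_2(\theta)\}$. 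Writing $I_j(\theta)=\log(M_1M_2)-\frac{1}{M_1M_2}A_j(\theta)$, where $A_j(\theta)$ is the double sum of noise expectations in (\ref{eqnset6_3}) and (\ref{eqnset6_5}), and using $\min\{a,b\}=-\max\{-a,-b\}$, this is equivalent to $\min_\theta\max\{A_1(\theta),A_2(\theta)\}$, which is already the min--max shape of (\ref{eqnset6_7}) but with $A_j$ still carrying the expectation over $N_j$.

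Next I would replace each $A_j(\theta)$ by the noise-free double sum $B_j(\theta)$ appearing inside (\ref{eqnset6_7}), i.e. the $\theta$-dependent part of the lower bound $I_j'$ in (\ref{eqnset6_4}) and (\ref{eqnset6_6}). Since $I_j'$ equals $-\frac{1}{M_1M_2}B_j$ plus the constants $\log(M_1M_2)$, $-\log e$, and the $\log 2$ produced by the factor $\tfrac12$ inside the logarithm, all independent of $\theta$, maximizing $\min\{I_1',I_2'\}$ over $\theta$ is identical to minimizing $\max\{B_1,B_2\}$, which is precisely (\ref{eqnset6_7}). The substantive claim is then that $\arg\min_\theta\max\{B_1,B_2\}$ approaches $\arg\min_\theta\max\{A_1,A_2\}$ as the power grows. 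For this I would show that at high SNR/INR both objectives are dominated by their minimum-distance terms: for small $\sigma_j$ the sum $\sum_{i_1,i_2}\exp(-|c|^2/2\sigma_j^2)$, with $c=(x_1^{k_1}-x_1^{i_1})+h_{21}e^{j\theta}(x_2^{k_2}-x_2^{i_2})$ (resp. the analogue for $\mathcal{S}_{sum_2}$), is controlled by the smallest nonzero $|c|$, i.e. by $d_{\min}(\mathcal{S}_{sum_j}(\theta))$ and its multiplicity, while the noise-averaged $A_j$ is governed by the same pairwise distances through the union-bound / pairwise-error behaviour of the sum constellation. Minimizing either $A_j$ or $B_j$ over $\theta$ is therefore asymptotically equivalent to maximizing $d_{\min}(\mathcal{S}_{sum_j}(\theta))$, so the two min--max problems share the same solution up to a vanishing error --- exactly the ``negligibly small error'' asserted in the statement.

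The main obstacle is this last high-power tightness step: making rigorous that the minimizer of the noise-free surrogate converges to the minimizer of the true noise-averaged objective. I expect the cleanest route is to sandwich $A_j(\theta)$ between two-sided bounds sharing the same leading minimum-distance exponential --- the lower bound $I_j'$ supplies one side, and a matching high-SNR expansion of the expectation (or an upper bound of the same form) supplies the other --- then bound the remaining terms by a geometric tail that is uniform in $\theta$ on any compact set of angles bounded away from the degenerate ones, and conclude that the ordering of the objective in $\theta$ is preserved once $\sigma_j$ is small enough. The degenerate angles, at which some points of $\mathcal{S}_{sum_j}$ collide so that $d_{\min}=0$, must be excluded: they are strictly suboptimal at high power and do not enter the min--max solution, but verifying this and checking that the switches of the inner $\min$ between $I_1$ and $I_2$ occur away from such angles is the only delicate bookkeeping the argument requires.
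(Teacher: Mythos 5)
Your proposal is correct, and its first half coincides with the paper's own proof: the paper likewise sets $\theta_{opt}^\prime=\arg\max_\theta \min\{I_1,I_2\}$ (only the sum-rate face moves with $\theta$), lower-bounds $I_1,I_2$ via Jensen's inequality to obtain $I_1^\prime,I_2^\prime$ as in (\ref{eqnset6_4}) and (\ref{eqnset6_6}), and cancels the $\theta$-independent constants to arrive at (\ref{eqnset6_7}). The difference lies in the high-power justification. The paper never argues through minimum distance: it approximates the noise expectations directly, splitting each integral over the event $\{|N_{1R}|\leq\sqrt{2\sigma_1^2},\,|N_{1I}|\leq\sqrt{2\sigma_1^2}\}$ and its complement, discarding the tail, replacing $|n_1+\mu_1|^2$ by $|\mu_1|^2$ inside the ball (valid because the smallest nonzero $|\mu_1|$ greatly exceeds $2\sqrt{2\sigma_1^2}$ at high power), and keeping explicit track of the collision multiplicities $P_1^{k_1,k_2}=|M_1(k_1,k_2)|$ rather than excluding degenerate angles as you propose. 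This reduces the true $\theta$-dependent term to $c_1\sum_{k_1,k_2}\log\sum_{i_1,i_2}e^{-|\mu_1|^2/\sigma_1^2}$, and the proof closes by asserting that $e^{-x^2}$ and $e^{-x^2/2}$ are both negligible for large $x$, so the factor-of-two mismatch in the exponent between this and the metric is immaterial. Your $d_{\min}$-dominance argument is in fact the cleaner way to justify exactly that final assertion --- the two exponentials have diverging ratio, and what actually matters is that both objectives induce the same ordering in $\theta$, governed by the same distance profile, which is precisely what you argue --- so your route supplies rigor where the paper is weakest, while the paper's route buys explicit bookkeeping of collisions without having to verify that the inner $\min$ switches away from degenerate angles. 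One detail you omit that the paper needs: it invokes $\sigma_1^2=\sigma_2^2$ so that the normalizing constants satisfy $c_1=c_2$, without which the two reduced objectives cannot be placed inside a single $\max\{\cdot,\cdot\}$; your version requires the analogous normalization when comparing $A_1$ with $A_2$.
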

\begin{proof}
Define $I_1,~I_2,~I_1^ \prime,~I_2^ \prime$ as in (\ref{eqnset6_3})-(\ref{eqnset6_6}). Equations (\ref{eqnset6_4}) and (\ref{eqnset6_6}) follow from application of Jensen's Inequality on the expectation terms of $I_1$ and $I_2$ respectively. The required angle of rotation is:  $\theta_{opt}^\prime=arg~\max_{\theta \in (0,2\pi)}~min \lbrace I_1,I_2 \rbrace$.  Since closed form expressions for $I_1~\mbox{and}~I_2$ are not available, we maximize the minimum of the lower bounds on $I_1~\mbox{and}~I_2$, i.e. $\max_{\theta \in (0,2\pi)}~min \lbrace I_1^ \prime,I_2^ \prime \rbrace$. Canceling the common terms in $I_1^ \prime$ and $I_2^ \prime$ we arrive at the expression for $\theta_{opt}$ in (\ref{eqnset6_7}). At high power levels $P_1,P_2$, the CC capacity region obtained from $\theta_{opt}^\prime$ will be close to that obtained from $\theta_{opt}$. The proof for this is as follows:

Let $N_{1R}=Re(N_1),$ $N_{1I}=Im(N_1)$ and $p_{N_1}(n_1)$ be the pdf of the noise $N_1$ at $n_1$. Also, define
\begin{align}
\nonumber
\mu_1(k_1,k_2,i_1,i_2)=\left(x_1^{k_1}-x_1^{i_1} \right) + h_{21}e^{j\theta} \left( x_2^{k_2}-x_2^{i_2} \right)
\end{align} where $k_1 \mbox{ and } i_1$ can take values from $0$ to $(M_1-1)$, and $k_2 \mbox{ and } i_2$ can take values from $0$ to $(M_2-1)$.
We shall denote $\mu_1(k_1,k_2,i_1,i_2)$ as $\mu_1$ for short; for a given $\theta$, it is understood that $\mu_1$ is a function of $k_1,k_2,i_1, \mbox{ and } i_2$. Note that, for a given $\theta$ and $(k_1,k_2)$, and for $(i_1,i_2)$ $\neq$ $(k_1,k_2)$, the absolute value of $\mu_1$ gives the distance between two points in ${\cal S}_{sum_1}$. Now, for a fixed $(k_1,k_2)$ and $\theta$, define the set
\begin{align}
 M_1(k_1,k_2)=\left\{ (i_1,i_2)\neq (k_1,k_2) \mid \mu_1=0 \right\}
\end{align} $M_1(k_1,k_2)$ is the null set for all $(k_1,k_2)$ if the mapping ${\varphi}_1$ is one-one, else it is a non-empty set for some ($k_1,k_2$). Let $P_1^{k_1,k_2}=|M_1(k_1,k_2)|$. Now, consider the expression for $I_1$ in (\ref{eqnset6_3}). The expectation term in it is the only term dependent on $\theta$. So, consider $I_1^{\prime \prime}$, defined as in (\ref{metricpf0}), alternatively written as in (\ref{metricpf1}). The probability of the event $\left \{|N_{1R}|>\sqrt {2\sigma_1^2},~\mbox{and }|N_{1I}|> \sqrt {2\sigma_1^2}\right \}$ to occur is very small, as the variances of $N_{1R} \mbox{ and } N_{1I}$ are both equal to $\frac{\sigma_1^2}{2}$. Hence, the second integral in (\ref{metricpf1}) can be neglected. At high power levels, for a given ($k_1,k_2$) and a given $\theta$, (\ref{metricpf2}) is satisfied. The expression for $I_1^{\prime \prime}$ is further reduced to (\ref {metricpf5}), where, (\ref {metricpf3}) and (\ref {metricpf4}) follow from (\ref{metricpf2}) and the fact that $|n_{1R}|\leq \sqrt {2\sigma_1^2},|n_{1I}|\leq \sqrt {2\sigma_1^2}$, and the constant $c_1$ in (\ref{metricpf4.2}) arises from evaluation of the integral in (\ref{metricpf4.1}). 
\begin{figure*}
\scriptsize
\begin{align}
\label{metricpf2}
\min_{i_1,i_2,k_1,k_2} |\mu_1|>>2\sqrt {2\sigma_1^2}~~ \mbox{   where $(i_1,i_2)\neq(k_1,k_2)$ and $(i_1,i_2)\notin M_1(k_1,k_2)$ ; $0 \leq k_1,i_1 \leq (M_1-1)$ and $0 \leq k_2,i_2 \leq (M_2-1)$}
\end{align}
\hrule
\end{figure*}
\begin{figure*}
\scriptsize
\begin{align}
\label{metricpf0}
\mbox{Let }I_1^{\prime \prime}& \triangleq \sum_{k_1 = 0}^{M_1-1} \sum_{k_2 = 0}^{M_2-1}E_{N_1} \left[log \left( \sum_{i_1=0}^{M_1-1} \sum_{i_2=0}^{M_2-1} exp {\left(- \frac{|N_1 + \left(x_1^{k_1}-x_1^{i_1}\right) + h_{21}e^{j\theta} \left(x_2^{k_2}-x_2^{i_2} \right)|^2-|N_1|^2}{\sigma_1^2} \right)}\right)\right]\\
\nonumber
&=\sum_{k_1 = 0}^{M_1-1} \sum_{k_2 = 0}^{M_2-1}\left[\int_{|n_{1R}|\leq \sqrt {2\sigma_1^2},|n_{1I}|\leq \sqrt {2\sigma_1^2}} p_{N_1}(n_1)~log\left(1+ P_1^{k_1,k_2}+\sum_{(i_1,i_2)\neq(k_1,k_2);(i_1,i_2)\notin M_1(k_1,k_2)}e^{-\frac{|n_1+ \mu_1 |^2}{\sigma_1^2}}e^{\frac{|n_1|^2}{\sigma_1^2}}\right)\,dn_1 \hspace{30cm}\right]\\
\label{metricpf1}
&\hspace{-30cm} \left[ \hspace{33cm} +\int_{|n_{1R}|> \sqrt {2\sigma_1^2},|n_{1I}|> \sqrt {2\sigma_1^2}} p_{N_1}(n_1)~log\left(1+ P_1^{k_1,k_2}+\sum_{(i_1,i_2)\neq(k_1,k_2);(i_1,i_2)\notin M_1(k_1,k_2)}e^{-\frac{|n_1+ \mu_1 |^2}{\sigma_1^2}}e^{\frac{|n_1|^2}{\sigma_1^2}}\right)\,dn_1 \right]\\
&\approx \sum_{k_1 = 0}^{M_1-1} \sum_{k_2 = 0}^{M_2-1}\left[\int_{|n_{1R}|\leq \sqrt {2\sigma_1^2},|n_{1I}|\leq \sqrt {2\sigma_1^2}} p_{N_1}(n_1)~log\left(1+ P_1^{k_1,k_2}+\sum_{(i_1,i_2)\neq(k_1,k_2);(i_1,i_2)\notin M_1(k_1,k_2)}e^{-\frac{|n_1+ \mu_1 |^2}{\sigma_1^2}}e^{\frac{|n_1|^2}{\sigma_1^2}}\right)\,dn_1 \right]\\
\label{metricpf3}
&{\approx}~ \sum_{k_1 = 0}^{M_1-1} \sum_{k_2 = 0}^{M_2-1}\left[\int_{|n_{1R}|\leq \sqrt {2\sigma_1^2},|n_{1I}|\leq \sqrt {2\sigma_1^2}} p_{N_1}(n_1)~log\left(1+ P_1^{k_1,k_2}+\sum_{(i_1,i_2)\neq(k_1,k_2);(i_1,i_2)\notin M_1(k_1,k_2)}e^{-\frac{|\mu_1 |^2}{\sigma_1^2}}e^{\frac{|n_1|^2}{\sigma_1^2}}\right)\,dn_1 \right]\\
\label{metricpf4}
&{\approx}~ \sum_{k_1 = 0}^{M_1-1} \sum_{k_2 = 0}^{M_2-1}\left[\int_{|n_{1R}|\leq \sqrt {2\sigma_1^2},|n_{1I}|\leq \sqrt {2\sigma_1^2}} p_{N_1}(n_1)~log\left(1+ P_1^{k_1,k_2}+\sum_{(i_1,i_2)\neq(k_1,k_2);(i_1,i_2)\notin M_1(k_1,k_2)}e^{-\frac{|\mu_1 |^2}{\sigma_1^2}}\right)\,dn_1 \right]\\
\label{metricpf4.1}
&=\sum_{k_1 = 0}^{M_1-1} \sum_{k_2 = 0}^{M_2-1}log\left(1+ P_1^{k_1,k_2}+\sum_{(i_1,i_2)\neq(k_1,k_2);(i_1,i_2)\notin M_1(k_1,k_2)}e^{-\frac{|\mu_1 |^2}{\sigma_1^2}}\right)  \left[\int_{|n_{1R}|\leq \sqrt {2\sigma_1^2},|n_{1I}|\leq \sqrt {2\sigma_1^2}} p_{N_1}(n_1)\,dn_1 \right]\\
\label{metricpf4.2}
&=\sum_{k_1 = 0}^{M_1-1} \sum_{k_2 = 0}^{M_2-1} c_1 log\left(1+ P_1^{k_1,k_2}+\sum_{(i_1,i_2)\neq(k_1,k_2);(i_1,i_2)\notin M_1(k_1,k_2)}e^{-\frac{|\mu_1 |^2}{\sigma_1^2}}\right)\\
\label{metricpf5}
&=\sum_{k_1 = 0}^{M_1-1} \sum_{k_2 = 0}^{M_2-1} c_1  log\left(\sum_{i_1 = 0}^{M_1-1} \sum_{i_2 = 0}^{M_2-1}e^{-\frac{|\mu_1 |^2}{\sigma_1^2}}\right) 
\end{align}
\hrule
\end{figure*}
We now carry out the same procedure for $I_2$ also. Define
\begin{align}
\nonumber
\mu_2(k_1,k_2,i_1,i_2)=h_{12}\left(x_1^{k_1}-x_1^{i_1} \right) + e^{j\theta} \left( x_2^{k_2}-x_2^{i_2} \right)
\end{align} where $k_1 \mbox{ and } i_1$ can take values from $0$ to $(M_1-1)$, and $k_2 \mbox{ and } i_2$ can take values from $0$ to $(M_2-1)$.
We shall denote $\mu_2(k_1,k_2,i_1,i_2)$ as $\mu_2$ for short. Now, for a fixed ($k_1,k_2$) and $\theta$, define the set
\begin{align}
 M_2(k_1,k_2)=\left\{ (i_1,i_2)\neq (k_1,k_2) \mid \mu_2=0 \right\}
\end{align} $M_2(k_1,k_2)$ is the null set for all ($k_1,k_2$) if the mapping ${\varphi}_2$ is one-one, else it is a non-empty set for some ($k_1,k_2$). Since, the expectation term of $I_2$ in (\ref{eqnset6_5}) is the only term dependent on $\theta$, consider, $I_2^{\prime \prime}$ defined as in (\ref{metricpf0.1}). At high power levels, for a given ($k_1,k_2$) and a given $\theta$, (\ref{metricpf2.1}) is satisfied. Following similar steps as for $I_1^{\prime \prime}$, expression for $I_2^{\prime \prime}$ reduces to (\ref{metricpf5.1}). Since, $\sigma_1^2=\sigma_2^2$, $N_1$ and $N_2$ have the same distribution and hence $c_1=c_2$.
\begin{figure*}
\scriptsize
\begin{align}
\label{metricpf2.1}
\min_{i_1,i_2,k_1,k_2} |\mu_2|>>2\sqrt {2\sigma_2^2}~~ \mbox{   where $(i_1,i_2)\neq(k_1,k_2)$ and $(i_1,i_2)\notin M_2(k_1,k_2)$ ; $0 \leq k_1,i_1 \leq (M_1-1)$ and $0 \leq k_2,i_2 \leq (M_2-1)$}
\end{align}
\hrule
\end{figure*}
\begin{figure*}
\scriptsize
\begin{align}
\label{metricpf0.1}
\mbox{Let }I_2^{\prime \prime}& \triangleq \sum_{k_1 = 0}^{M_1-1} \sum_{k_2 = 0}^{M_2-1}E_{N_2} \left[log \left( \sum_{i_1=0}^{M_1-1} \sum_{i_2=0}^{M_2-1} exp {\left(- \frac{|N_2 +  h_{12}\left(x_1^{k_1}-x_1^{i_1}\right) + e^{j\theta} \left(x_2^{k_2}-x_2^{i_2} \right)|^2-|N_2|^2}{\sigma_2^2} \right)}\right)\right]\\
\label{metricpf5.1}
&\approx \sum_{k_1 = 0}^{M_1-1} \sum_{k_2 = 0}^{M_2-1} c_2 log\left(\sum_{i_1 = 0}^{M_1-1} \sum_{i_2 = 0}^{M_2-1}e^{-\frac{|\mu_2 |^2}{\sigma_2^2}}\right)  ~~~~~\mbox{ where, }c_2=\left[\int_{|n_{2R}|\leq \sqrt {2\sigma_2^2},|n_{2I}|\leq \sqrt {2\sigma_2^2}} p_{N_2}(n_2)\,dn_2 \right]
\end{align}
\hrule
\end{figure*}
Now, consider the terms in the metric for $\theta_{opt}$ in (\ref{eqnset6_7}), rewritten in terms of $\mu_1$ and $\mu_2$ in (\ref{metricpf6}) and (\ref{metricpf6.1}) respectively.
\begin{figure*}
\scriptsize
\begin{align}
\label{metricpf6}
&\sum_{k_1 = 0}^{M_1-1} \sum_{k_2 = 0}^{M_2-1} log \left( \sum_{i_1=0}^{M_1-1} \sum_{i_2=0}^{M_2-1} e^{ \left(- \frac{|\left( \left(x_1^{k_1}-x_1^{i_1}\right) \right) + h_{21}e^{j\theta} \left( x_2^{k_2}-x_2^{i_2} \right)|^2}{2\sigma_1^2} \right)}\right)=\sum_{k_1 = 0}^{M_1-1} \sum_{k_2 = 0}^{M_2-1} log \left( \sum_{i_1=0}^{M_1-1} \sum_{i_2=0}^{M_2-1} e^ {\left(- \frac{|\mu_1|^2}{2\sigma_1^2}\right)}\right)\\
\label{metricpf6.1}
&\sum_{k_1 = 0}^{M_1-1} \sum_{k_2 = 0}^{M_2-1} log \left( \sum_{i_1=0}^{M_1-1} \sum_{i_2=0}^{M_2-1} e^{ \left(- \frac{|\left( h_{12}\left(x_1^{k_1}-x_1^{i_1}\right) \right) + e^{j\theta} \left( x_2^{k_2}-x_2^{i_2} \right)|^2}{2\sigma_2^2} \right)}\right)=\sum_{k_1 = 0}^{M_1-1} \sum_{k_2 = 0}^{M_2-1} log \left( \sum_{i_1=0}^{M_1-1} \sum_{i_2=0}^{M_2-1} e^ {\left(- \frac{|\mu_2|^2}{2\sigma_2^2}\right)}\right)
\end{align}
\hrule
\end{figure*}
At high values values of $x$, the difference between $e^{-x^2}$ and $e^{-x^2/2}$ is very small. Hence, the expressions, (\ref{metricpf5}) divided by $c_1$ and (\ref{metricpf6}), and, (\ref{metricpf5.1}) divided by $c_2$ and (\ref{metricpf6.1}) give almost the same value at high powers. In other words, 
\begin{align}
\nonumber
\theta_{opt}^\prime&=\max_{\theta \in (0,2\pi)} min\{I_1,I_2\}=\min_{\theta \in (0,2\pi)} max\left\{\frac{I_1^{\prime\prime}}{c_1},\frac{I_2^{\prime\prime}}{c_1}\right \}\approx \theta_{opt}.
\end{align}
\end{proof}

\begin{table*}
\caption{Optimum angle of rotation and sum-capacities for QPSK alphabet pair (${\cal S}_1,{\cal S}_2$) for some values of channel gains and Powers.}
\centering
\scriptsize
\begin{tabular}{|c|c|c|c|c|c|c|c|c|}
\hline
$P_1$ & $P_2$ & $h_{12}$ & $h_{21}$ & $\theta_{opt}$ & $\theta_{opt}^\prime$ & Max. CC Sum & Max. CC Sum & Max. CC Sum \\
(Watt) & (Watt) &  &  &  &  &  Capacity (Unrotated)  &  Capacity (Rotated by $\theta_{opt}$) & Capacity (Rotated by $\theta_{opt}^\prime$)\\
\hline
$3.5$ & $6$ & $1 \angle 10^\circ$ & $1 \angle 20^\circ$ & $39.53^\circ$ & $41.25^\circ$ & $3.006$ & $3.107$ & $3.108$ \\
\hline
$3.5$ & $6$ & $1.2 \angle 10^\circ$ & $1.1 \angle 20^\circ$ & $46.41^\circ$ & $44.69^\circ$ & $2.994$ & $3.22$ & $3.221$ \\
\hline
$5$ & $5$ & $1.2 \angle 15^\circ$ & $1.5 \angle 5^\circ$ & $73.91^\circ$ & $72.19^\circ$ & $3.178$ & $3.319$ & $3.32$ \\
\hline
$8$ & $6$ & $1.8 \angle 40^\circ$ & $1.3 \angle 70^\circ$ & $49.85^\circ$ & $51.57^\circ$ & $3.459$ & $3.577$ & $3.58$ \\
\hline
\end{tabular}
\hrule
\label{table1}
\end{table*}
Note that the metric is easy to evaluate as it does not involve $N_1$ and $N_2$. On the contrary, $\theta_{opt}^\prime$ has to be evaluated numerically, as done in \cite{KnS}. The metric works well, as illustrated by Fig. \ref{fig:arbitchgains_qpsk} and some simulation results in Table \ref{table1}, (the channel gains and powers are chosen randomly,)  where the capacity regions obtained from $\theta_{opt}$ and $\theta_{opt}^\prime$ are too close to each other\footnote{In all the plots, in this paper, ``Rotated Acc. to Numerically Computed angle'' refers to the CC capacity according to rotation by $\theta_{opt}^\prime$ which is computed numerically, ``Rotated Acc. to Metric'' refers to the CC capacity according to rotation by $\theta_{opt}$ and ``Sum'' refers to the maximum sum rate $R_1+R_2$ on the respective curves.} (the last two columns). For a given constellation pair, there will be a significant change in the CC capacity due to rotation only at high powers. The reason for this is given below.

\begin{figure}[htbp]
\centering
\includegraphics[totalheight=2.5in,width=3.5in]{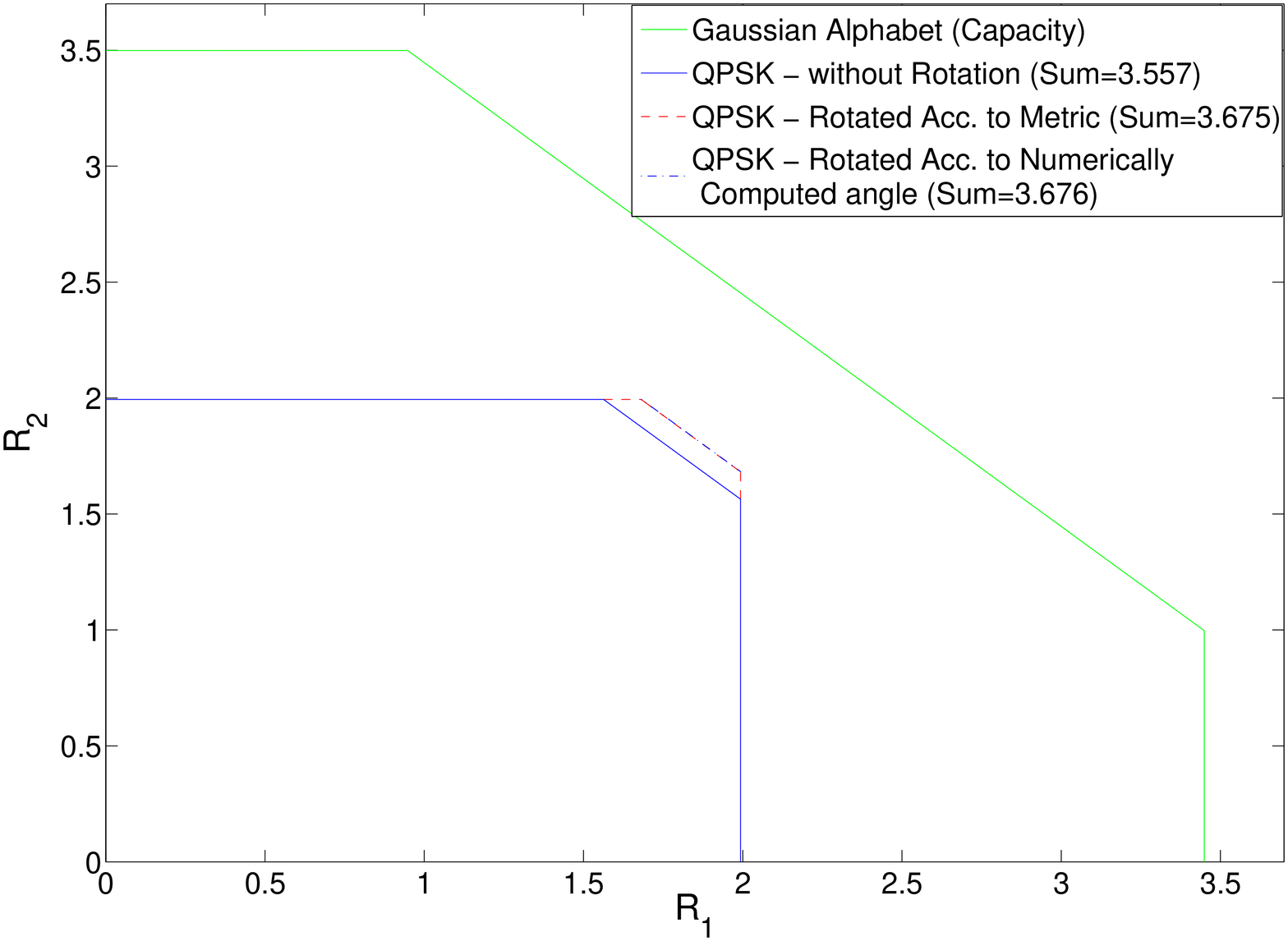}
\caption{CC capacity for QPSK pair (${\cal S}_1,{\cal S}_2$) with $P_1$=$9.92$ Watt (=$9.96dB$), $P_2$=$10.3$ Watt (=$10.13dB$), $n_1\mbox{$=$}n_2\mbox{$=$}1$, $h_{12}$=$1.03\angle-112^\circ$, $h_{21}$=$1.07\angle-44^\circ$, $\theta_{opt}^\prime$=$79.0682^\circ$, $\theta_{opt}$=$ 77.3493^\circ$. The curves corresponding to $\theta_{opt}^\prime$ and $\theta_{opt}$ are close and hence, indistinguishable.}	
\label{fig:arbitchgains_qpsk}	
\end{figure}

The sphere packing argument for the G-MAC, in \cite{HaR}, which explained why the capacity does not improve much with rotation at low $SNR$ can be extended to the general GSIC as follows: Fixed powers ($P_1,P_2$) and channel gains $h_{ij}$ ($i,j=1,2$), which can take complex values, can correspond to fixed radius, $r_1$ and $r_2$, of two dimensional balls, $B_{r_1}$ and $B_{r_2}$ respectively, and the signal points in the sum-constellation ${\cal S}_{sum_i}$ can correspond to points inside its ball $B_{r_i}$ ($i=1,2$). As the number of points in at least one of the input constellations (${\cal S}_1,{\cal S}_2$) increases, the number of points, $M_i=\mid{\cal S}_{sum_i}\mid$ in $B_{r_i}$ increases and hence the density of points in $B_{r_i}$ ($i=1,2$) increases. From (\ref{eqnset6_3}) and (\ref{eqnset6_5}), it can be seen that the CC capacity  depends on the distance distribution of the points of ${\cal S}_{sum_i}$ in $B_{r_i}$ ($i=1,2$). It is clear that rotation of one of the constellations, will cause perturbations in ${\cal S}_{sum_i}$, and hence its points in $B_{r_i}$ ($i=1,2$) gets rearranged. For large values of $M_1$ or $M_2$, even though the points in $B_{r_i}$ ($i=1,2$) rearrange themselves as a result of rotation, the density of $B_{r_i}$ is so large that the distance distribution of the points inside the balls change negligibly and as a result of (\ref{eqnset6_3}) and (\ref{eqnset6_5}), there is not much change in the CC capacity due to rotation. Fig. \ref{fig:arbitchgains_qpsk} and Fig. \ref{fig:arbitchgains_psk8} illustrate this argument. At the same values of channel gains and powers there is negligible improvement in the CC capacity for the 8-PSK pair (${\cal S}_1,{\cal S}_2$), shown in  Fig. \ref{fig:arbitchgains_psk8}, while there is good improvement in CC capacity for the
QPSK pair (${\cal S}_1,{\cal S}_2$) shown in  Fig. \ref{fig:arbitchgains_qpsk}.
\begin{figure}[htbp]
\centering
\includegraphics[totalheight=2.5in,width=3.5in]{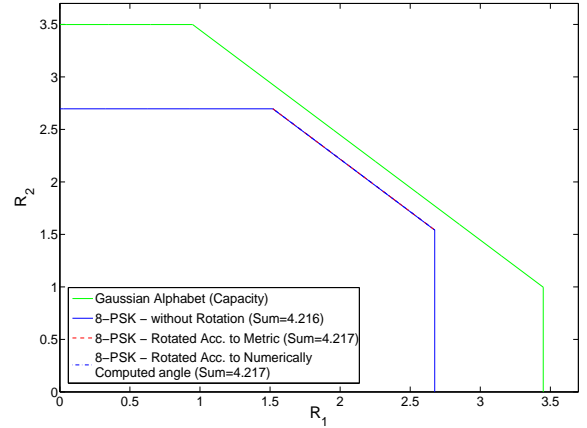}
\caption{CC capacity for 8-PSK pair (${\cal S}_1,{\cal S}_2$) with $P_1$=$9.92$ Watt (=$9.96dB$), $P_2$=$10.3$ Watt (=$10.13dB$), $n_1\mbox{$=$}n_2\mbox{$=$}1$, $h_{12}$=$1.03\angle-112^\circ$, $h_{21}$=$1.07\angle-44^\circ$.}
\label{fig:arbitchgains_psk8}	
\end{figure}
The arguments tally with the Gaussian input alphabet case where the input alphabets are unconstrained and the capacity remains invariant to rotation. So, at fixed channel gains, for rotation to have considerable effect on the CC capacity of a finite constellation pair, the powers should be commensurate with the size of the constellation, and hence the powers should be high enough. So, we need to rotate the constellation only at sufficiently high powers.
%

\section{SUBOPTIMALITY OF FDMA WITH FINITE CONSTELLATIONS}
\label{sec4}
FDMA with finite input constellation for two-user GMAC was first plotted in \cite{HaR2} and some interesting comparisons with behaviour for Gaussian alphabets were made. In the two-user GSIC, it is known that, for the Gaussian alphabet case with $P_1$=$P_2$, when $SNR_1$=$SNR_2$=$INR_1$=$INR_2$, the FDMA rate curve touches the capacity curve \cite{GaY}. It is also shown in \cite{GaY}, for the Gaussian alphabet case with $P_1$=$P_2$, that the FDMA inner-bound, at the optimum sum-rate point, is better than the simultaneous-decoding inner-bound in the weak interference regime when $INR_1$=$INR_2$$<$$SNR_1$=$SNR_2$. In this section, we show that, for the Gaussian alphabet case with $P_1$ not necessarily equal to $P_2$, the FDMA rate curve touches the capacity curve of the GSIC when $SNR_1$$\leq$$INR_2$ and $SNR_2$=$INR_1$ or when $SNR_1$=$INR_2$ and $SNR_2$$\leq$$INR_1$ and throughout the weak-interference regime the FDMA inner-bound, at the optimum sum-rate point, is always better than  the simultaneous-decoding inner-bound. On the contrary, for the constellation constrained case, with $P_1$ not necessarily equal to $P_2$ and ${\cal S}_1$=${\cal S}_2$, we show that the FDMA rate curve does not touch the CC capacity curve throughout the strong-interference regime. We also show that, for a portion of the weak interference regime, under constellation constraints, the simultaneous-decoding inner-bound, enlarged by relative rotation between the finite constellations, is strictly better than the FDMA inner-bound. Throughout the section we assume ${\cal S}_1$=${\cal S}_2$.

Since FDMA involves bandwidth we need to consider a modified channel model as described below.
\subsection{Model for CC Capacity with Full Bandwidth Usage}
The model of the two-user GIC (shown in Fig. \ref{fig:AWGNIC}) under strong interference considered in this section is similar to the one presented in Section \ref{sec1}. We point out only the changes in the signal model with reference to the model in Section \ref{sec1}. It is assumed that User-1 and User-2 communicate to the destination at the same time and in the same frequency band of W Hertz. To take into consideration the bandwidth, the variance of the additive noise at both the receivers are given by $WN_0$. The signals received at the destinations are given by
\begin{align}
\nonumber
&Y_1 = \sqrt{P_1}X_1 + h_{21}\sqrt{P_2}X_2 + N_1\\
 \label{eqnset9}
&Y_2 = h_{12}\sqrt{P_1}X_1 + \sqrt{P_2}X_2 + N_2,
\end{align}
$\mbox{where, }X_1 \in {\cal S}_1, X_2 \in {\cal S}_2e^{j \theta}$ (finite constellations ${\cal S}_1$ and ${\cal S}_2$ are be of unit power), $N_1 \sim {\cal {CN}}(0,WN_0)\mbox{ and } N_2 \sim {\cal {CN}}(0,WN_0)$ ($N_0/2$ is the power spectral density of the AWGN in each dimension). Without loss of generality we take $N_0=1$. We assume that every channel use consumes $T$ seconds for each user (where $\frac{1}{T}= W$ Hertz).

Applying the CC capacity regions used in Section \ref{sec2} to the channel model in (\ref{eqnset9}), the set of CC capacity values (in bits per channel use) that define the boundary of the CC capacity region, are given by
\begin{align}
\label{eqnset10a}
R_1& \leq I_W\left(\sqrt{P_1}X_1;Y_1|\sqrt{P_2}X_2\right)\\
\label{eqnset10b}
R_2& \leq I_W\left(\sqrt{P_2}X_2;Y_2|\sqrt{P_1}X_1\right)\\
\nonumber
R_1+R_2 & \leq min \left \{I_W\left(\sqrt{P_1}X_1,\sqrt{P_2}X_2;Y_1\right),\hspace{30cm} \right \}\\
\label{eqnset10}
&\hspace{-30cm} \left \{ \hspace{30cm}~~~~~~~~~~~~~~I_W\left(\sqrt{P_1}X_1,\sqrt{P_2}X_2;Y_2\right) \right \},
\end{align}
where, the expressions for the mutual informations in (\ref{eqnset10a}) and the first term of (\ref{eqnset10}) are given in (\ref{eqnset11a}) and (\ref{eqnset11b}) (shown at the top of next page) respectively, and, the expressions for the mutual informations in (\ref{eqnset10b}) and the second term of (\ref{eqnset10}) are similar to the ones in (\ref{eqnset11a}) and (\ref{eqnset11b}) respectively. We denote the mutual informations with subscript $W$ as they depend on the bandwidth $W$. The CC capacity is achieved by simultaneous-decoding scheme with finite input constellations.
\begin{figure*}
\scriptsize
\begin{align}
\centering
\label {eqnset11a}
&I_W(\sqrt{P_1}X_1;Y_1|\sqrt{P_2}X_2)= logM_1 - \frac{1}{M_1} \sum_{k_1 = 0}^{M_1-1} E_{N_1} \left[log \left( \sum_{i_1=0}^{M_1-1} exp \left(- \frac {\left( |N_1 + \sqrt{P_1} \left( x_1^{k_1}-x_1^{i_1} \right)|^2-|N_1|^2 \right)}{W} \right)\right)\right]\\
\centering
\nonumber
&I_W(\sqrt{P_1}X_1,\sqrt{P_2}X_2;Y_1)= log(M_1M_2)\\
\label {eqnset11b}
&~~~~~~~~~~~~~~~~~~~~~~~~~~~~~~~~~~~~ - \frac{1}{M_1M_2} \sum_{k_1 = 0}^{M_1-1} \sum_{k_2 = 0}^{M_2-1} E_{N_1} \left[log \left( \sum_{i_1=0}^{M_1-1} \sum_{i_2=0}^{M_2-1} exp \left(- \frac {\left( |N_1 + \sqrt{P_1} \left( x_1^{k_1}-x_1^{i_1} \right) + h_{21}e^{j\theta}\sqrt{P_2} \left( x_2^{k_2}-x_2^{i_2} \right)|^2-|N_1|^2 \right)}{W}\right)\right)\right]
\end{align}
\hrule
\end{figure*}

Since every channel use consumes $T$ seconds, the rate pairs (in bits per seconds) that define the CC capacity region  are given by
\begin{align}
\label{eqnset12a}
R_1& \leq WI_W\left(\sqrt{P_1}X_1;Y_1|\sqrt{P_2}X_2\right)\\
\label{eqnset12b}
R_2& \leq WI_W\left(\sqrt{P_2}X_2;Y_2|\sqrt{P_1}X_1\right)\\
\nonumber
R_1+R_2 & \leq min \left \{WI_W\left(\sqrt{P_1}X_1,\sqrt{P_2}X_2;Y_1\right),\hspace{30cm} \right \}\\
\label{eqnset12c}
&\hspace{-30cm} \left \{ \hspace{30cm}~~~~~~~~~~~~~~WI_W\left(\sqrt{P_1}X_1,\sqrt{P_2}X_2;Y_2\right) \right \}.
\end{align}
The capacity region of the strong interference channel is given by
\begin{align}
\label{eqnset13a}
R_1& \leq Wlog\left(1+\frac{P_1}{W}\right)\\
\label{eqnset13b}
R_2& \leq Wlog\left(1+\frac{P_2}{W}\right)\\
\nonumber
R_1+R_2 & \leq min \left \{Wlog\left(1+\frac{P_1+\mid h_{21}\mid^2P_2}{W}\right),\hspace{30cm} \right \}\\
\label{eqnset13c}
&\hspace{-30cm} \left \{ \hspace{30cm}~~~~~~~~~~~~~~Wlog\left(1+\frac{\mid h_{12}\mid^2P_1+P_2}{W}\right) \right \}.
\end{align}
The capacity can be achieved by simultaneous-decoding scheme, with Gaussian input alphabets.
\subsection{CC Capacity with FDMA}
User-1--Receiver-1 agree on $W_1=\alpha W$ bandwidth and User-2--Receiver-2 agree on the non-overlapping $W_2=(1-\alpha) W$ bandwidth, $0< \alpha < 1$. Hence, for each $i$ = $1, 2$, User-$i$, with bandwidth $W_i$ and power constraint $ P_i$, equipped with finite constellation $\sqrt{P_i} {\cal S}_i$, views a Single-Input Single-Output (SISO) AWGN channel with Receiver-$i$ without interference. The circularly symmetric Gaussian noise at the Receiver-$i$ has mean zero and variance $W_iN_0$ (and without loss of generality we assume $N_0=1$). Hence, the channel model is given by
\begin{align}
&Y_1 = \sqrt{P_1}X_1 + N_1\\
 \label{eqnsetFDMACC}
&Y_2 = \sqrt{P_2}X_2 + N_2,
\end{align}
$\mbox{where, }X_1\in {\cal S}_1, X_2\in {\cal S}_2$ (${\cal S}_1$ and ${\cal S}_2$ are taken to be of unit power), $N_i \sim {\cal {CN}}(0,W_i)\mbox{ ($i=1,2$)}$.

The maximum achievable rate pair (in bits per second) for the two users, under constellation constraints, are given by
\begin{align}
\label{eqnset14a}
&R_1 \leq W_1I_{W_1}\left(\sqrt{P_1}X_1;Y_1|\sqrt{P_2}X_2\right)\\
\label{eqnset14b}
&R_2 \leq W_2I_{W_2}\left(\sqrt{P_2}X_2;Y_2|\sqrt{P_1}X_1\right).
\end{align}
Therefore, the sum-rate region achievable  with FDMA, under constellation constraints, is given by
\begin{align}
\nonumber
&R_1+R_2 \leq W_1I_{W_1}\left(\sqrt{P_1}X_1;Y_1|\sqrt{P_2}X_2\right)\\
\label{eqnset15}
&~~~~~~~~~~~~~~~~~~~~~~+W_2I_{W_2}\left(\sqrt{P_2}X_2;Y_2|\sqrt{P_1}X_1\right).
\end{align}

With Gaussian input alphabets, the achievable rate pair for FDMA is given by 
\begin{align}
\label{eqnset16a}
&R_1 \leq W_1log\left(1+ \frac{P_1}{W_1}\right)\\
\label{eqnset16b}
&R_2 \leq W_2log\left(1+ \frac{P_2}{W_2}\right).
\end{align}
%

The following theorems show that, in the finite constellation case, the $\alpha$ that would maximize the sum rate for FDMA is the same as that in the Gaussian alphabet case.
\begin{theorem}
 \label{thm2}
For the GIC model in Fig. \ref{fig:AWGNIC}, when ${\cal S}_1={\cal S}_2$, the value of $\alpha$ that would maximize the sum rate for FDMA, in the finite constellation case, is equal to $\frac{P_1}{P_1+P_2}$.
\end{theorem}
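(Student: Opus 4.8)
The plan is to reduce both per-user rates in the FDMA sum-rate (\ref{eqnset15}) to a single common function of the signal-to-noise ratio, exploit the hypothesis $\mathcal{S}_1=\mathcal{S}_2$, and then optimize a one-dimensional concave function of $\alpha$. First I would note that in the FDMA model (\ref{eqnsetFDMACC}) each receiver sees an interference-free point-to-point AWGN channel, so the conditioning on the other user in (\ref{eqnset14a})--(\ref{eqnset14b}) is vacuous. Writing the noise as $N_i=\sqrt{W_i}\,\tilde N$ with $\tilde N\sim\mathcal{CN}(0,1)$ and substituting into (\ref{eqnset11a}), every dependence on $P_i$ and $W_i$ collapses into the single ratio $P_i/W_i$. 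Hence, with the common unit-power constellation $\mathcal{S}\triangleq\mathcal{S}_1=\mathcal{S}_2$, I can define
\begin{align}
\nonumber
g(\rho)\triangleq \log M-\frac{1}{M}\sum_{k=0}^{M-1}E_{\tilde N}\left[\log\left(\sum_{i=0}^{M-1}\exp\left(-|\tilde N+\sqrt{\rho}(x^{k}-x^{i})|^2+|\tilde N|^2\right)\right)\right]
\end{align}
so that $I_{W_1}(\sqrt{P_1}X_1;Y_1|\sqrt{P_2}X_2)=g(P_1/W_1)$ and $I_{W_2}(\sqrt{P_2}X_2;Y_2|\sqrt{P_1}X_1)=g(P_2/W_2)$. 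The two users share the \emph{same} $g$ precisely because $\mathcal{S}_1=\mathcal{S}_2$; this is where the hypothesis enters.

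With $W_1=\alpha W$ and $W_2=(1-\alpha)W$, the FDMA sum-rate (\ref{eqnset15}) becomes
\begin{align}
\nonumber
R(\alpha)=\alpha W\,g\!\left(\frac{P_1}{\alpha W}\right)+(1-\alpha)W\,g\!\left(\frac{P_2}{(1-\alpha)W}\right).
\end{align}
Differentiating, introducing $h(\rho)\triangleq g(\rho)-\rho\,g'(\rho)$ and the per-user SNRs $\rho_1=P_1/(\alpha W)$, $\rho_2=P_2/((1-\alpha)W)$, and collecting the $\rho g'$ terms produced by the chain rule, the first-order condition $R'(\alpha)=0$ reduces to $h(\rho_1)=h(\rho_2)$. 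I would then simply check that $\alpha^\star=P_1/(P_1+P_2)$ makes $\rho_1=\rho_2=(P_1+P_2)/W$, so it automatically satisfies this condition and is therefore a stationary point.

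To promote this stationary point to the unique global maximizer, the key input is that $g$ is strictly concave in $\rho$. This follows from the I-MMSE relation: $g'(\rho)$ is a positive multiple of the MMSE of estimating the input from the output, which is strictly decreasing in $\rho$ for a non-degenerate finite constellation, so $g''(\rho)<0$. Consequently $h'(\rho)=-\rho\,g''(\rho)>0$ for $\rho>0$, i.e.\ $h$ is strictly increasing, and $h(\rho_1)=h(\rho_2)$ forces $\rho_1=\rho_2$; together with $W_1+W_2=W$ this gives $\alpha^\star=P_1/(P_1+P_2)$ as the only solution. Equivalently, each summand of $R(\alpha)$ is the perspective transform $t\mapsto t\,g(a/t)$ of the concave function $g$, hence concave in $\alpha$, so $R$ is strictly concave and its unique maximizer is exactly the stationary point found above. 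Running the same computation with $g(\rho)=\log(1+\rho)$ recovers the classical Gaussian value, which is why the optimal $\alpha$ is unchanged under constellation constraints.

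The main obstacle is the concavity step: establishing $g''(\rho)<0$ rigorously, and justifying the interchange of differentiation and the expectation over $\tilde N$ needed to compute $g'$. Everything else is routine one-variable optimization. The real content of the theorem is that the interference-free per-user rate, viewed as a function of SNR, is strictly concave and \emph{identical} for the two users (by $\mathcal{S}_1=\mathcal{S}_2$), so that equalizing the two SNRs — which is exactly what $\alpha=P_1/(P_1+P_2)$ accomplishes — is optimal.
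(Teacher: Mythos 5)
Your proof is correct, and its skeleton is the same as the paper's: verify that $\alpha=\frac{P_1}{P_1+P_2}$ is a stationary point of the FDMA sum rate, then show the sum rate is concave in $\alpha$ so that the stationary point is the global maximizer, with ${\cal S}_1={\cal S}_2$ entering in both arguments exactly as you use it (identical per-user rate functions, so equalizing the two SNRs gives stationarity). The execution differs in two places, both in your favor. First, for stationarity the paper differentiates the explicit integral expressions (\ref{thm2_pf8})--(\ref{thm2_pf9}) and then rescales the noise ($n_1^\prime=n_1/\sqrt{\alpha}$, $n_2^\prime=n_2/\sqrt{1-\alpha}$) to exhibit the cancellation at $\alpha=\frac{P_1}{P_1+P_2}$; your reduction to the single function $g(\rho)$ and the condition $h(\rho_1)=h(\rho_2)$ with $h(\rho)=g(\rho)-\rho g^\prime(\rho)$ accomplishes the same thing abstractly and more transparently. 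Second, and more substantively, the concavity step: the paper \emph{asserts} that $f_1$ in (\ref{thm2_pf10}) is concave in $1/\alpha$ and then runs the time-sharing/Jensen argument of (\ref{thm2_pf12})--(\ref{thm2_pf15}), which is precisely the perspective-transform concavity ($t\mapsto t\,g(a/t)$) you invoke; you instead supply an actual justification of the underlying concavity of the constellation-constrained AWGN mutual information in SNR via the I-MMSE relation (mutual-information derivative equals MMSE, which is strictly decreasing for a finite non-degenerate constellation). This closes the one gap the paper leaves open, at the cost of citing an external result, and your strict monotonicity of $h$ additionally gives uniqueness of the stationary point, which the paper only gets implicitly through strictness of its Jensen inequality.
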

\begin{proof}
\begin{figure*}
\scriptsize
\begin{align}
\label{thm2_pf1}
W_1 I_{W_1}(\sqrt{P_1}X_1;Y_1|\sqrt{P_2}X_2)&= \alpha W \left(logM_1 - log~ e - \frac{1}{M_1} \sum_{k_1 = 0}^{M_1-1} E_{N_1} \left[log \left( \sum_{i_1=0}^{M_1-1} exp \left(- \frac {\left( |N_1 +\sqrt{P_1} \left( x_1^{k_1}-x_1^{i_1} \right)|^2 \right)}{\alpha W} \right)\right)\right]\right)\\
\label{thm2_pf2}
W_2I_{W_2}(\sqrt{P_2}X_2;Y_2|\sqrt{P_1}X_1)&= (1-\alpha) W \left(logM_2 - log~ e - \frac{1}{M_2} \sum_{k_2 = 0}^{M_2-1} E_{N_2} \left[log \left( \sum_{i_2=0}^{M_2-1} exp \left(- \frac {\left( |N_2 + \sqrt{P_2} \left( x_2^{k_2}-x_2^{i_2} \right)|^2 \right)}{(1-\alpha) W} \right)\right)\right]\right)\\
\label{thm2_pf3}
{\cal I}_1& \triangleq \alpha W \sum_{k_1 = 0}^{M_1-1} E_{N_1} \left[log \left( \sum_{i_1=0}^{M_1-1} exp \left(- \frac {\left( |N_1 + \sqrt{P_1} \left( x_1^{k_1}-x_1^{i_1} \right)|^2 \right)}{\alpha W} \right)\right)\right]\\
\label{thm2_pf4}
{\cal I}_2& \triangleq (1-\alpha) W \sum_{k_2 = 0}^{M_2-1} E_{N_2} \left[log \left( \sum_{i_2=0}^{M_2-1} exp \left(- \frac {\left( |N_2 + \sqrt{P_2} \left( x_2^{k_2}-x_2^{i_2} \right)|^2 \right)}{(1-\alpha) W} \right)\right)\right]
\end{align}
\hrule
\end{figure*}
\begin{figure*}
\scriptsize
\begin{align}
\label{thm2_pf5}
&\frac{d}{d\alpha}\left(W_1 I_{W_1}(\sqrt{P_1}X_1;Y_1|\sqrt{P_2}X_2)+W_2I_{W_2}(\sqrt{P_2}X_2;Y_2|\sqrt{P_1}X_1) \right)=0\\
\label{thm2_pf6}
\Rightarrow~& WlogM_1-WlogM_2-Wlog~e+Wlog~e-\frac{1}{M_1}\frac{d {\cal I}_1}{d\alpha}-\frac{1}{M_2}\frac{d {\cal I}_2}{d\alpha}=0\\
\label{thm2_pf7}
\Rightarrow~&\frac{d {\cal I}_1}{d\alpha}+\frac{d {\cal I}_2}{d\alpha}=0
\end{align}
\hrule
\end{figure*}
\begin{figure*}
\scriptsize
\begin{align}
\label{thm2_pf8}
&\frac{d {\cal I}_1}{d\alpha}=\left(\sum_{k_1 = 0}^{M_1-1} \frac{W}{\sqrt{\pi \alpha W}} \int_{0<|n_1|<\infty} e^{- \frac{|n_1|^2}{\alpha W}}\left[\frac{|n_1|^2}{\alpha W}~log\left(\sum_{i_1=0}^{M_1-1} e^{-\frac{|n_1+\mu_1|^2}{\alpha W}}\right)+\frac{\left(\sum_{i_1=0}^{M_1-1}e^{-\frac{|n_1+\mu_1|^2}{\alpha W}}~\frac{|n_1+\mu_1|^2}{\alpha W}\right)}{\sum_{i_1=0}^{M_1-1} e^{-\frac{|n_1+\mu_1|^2}{\alpha W}}}\right]\,dn_1\right)~+\frac{1}{2}{\cal I}_1^\prime.\\
\label{thm2_pf9}
&\frac{d {\cal I}_2}{d\alpha}=\left(\sum_{k_2 = 0}^{M_2-1} \frac{W}{\sqrt{\pi (1-\alpha) W}} \int_{0<|n_2|<\infty} e^{- \frac{|n_2|^2}{(1-\alpha) W}}\left[-\frac{|n_2|^2}{(1-\alpha) W}~log\left(\sum_{i_2=0}^{M_2-1} e^{-\frac{|n_2+\mu_2|^2}{(1-\alpha) W}}\right)-\frac{\left(\sum_{i_2=0}^{M_2-1}e^{-\frac{|n_2+\mu_2|^2}{(1-\alpha) W}}~\frac{|n_2+\mu_2|^2}{(1-\alpha) W}\right)}{\sum_{i_2=0}^{M_2-1} e^{-\frac{|n_2+\mu_2|^2}{(1-\alpha) W}}}\right]\,dn_2\right)~-\frac{1}{2}{\cal I}_2^\prime.
\end{align}
\hrule
\end{figure*}
The expressions for the maximum achievable rates with FDMA, under constellation constraints, in (\ref{eqnset14a}) and (\ref{eqnset14b}) are given in (\ref{thm2_pf1}) and (\ref{thm2_pf2}). Define ${\cal I}_1$ and ${\cal I}_2$ as in (\ref{thm2_pf3}) and (\ref{thm2_pf4}). It is required to find
\begin{align}
\nonumber
&\alpha_{opt}= arg \max_{\alpha \in (0,1)} \left(W_1 I_{W_1}(\sqrt{P_1}X_1;Y_1|\sqrt{P_2}X_2) \hspace{30cm}\right)\\
\nonumber
&\hspace{-30cm}\left(\hspace{33cm}+W_2I_{W_2}(\sqrt{P_2}X_2;Y_2|\sqrt{P_1}X_1) \right).
\end{align}
Therefore, at $\alpha=\alpha_{opt}$, (\ref{thm2_pf5})-(\ref{thm2_pf7}) are satisfied. As ${\cal S}_1={\cal S}_2$, (\ref{thm2_pf6}) reduces to (\ref{thm2_pf7}). Now, define ${\cal I}_1^\prime={\cal I}_1/ \alpha$ and ${\cal I}_2^\prime={\cal I}_2/ (1-\alpha)$. Also, define $\mu_1(k_1,i_1)=\sqrt{P_1} \left( x_1^{k_1}-x_1^{i_1} \right)$ and $\mu_2(k_2,i_2)=\sqrt{P_2} \left( x_2^{k_2}-x_2^{i_2} \right)$. We denote $\mu_1(k_1,i_1)$ as simply $\mu_1$ and $\mu_2(k_2,i_2)$ as $\mu_2$; it is understood that $\mu_1$ and $\mu_2$ are functions of $(k_1,i_1)$ and $(k_2,i_2)$ respectively. Expressions for $\frac{d {\cal I}_1}{d\alpha}$ and $\frac{d {\cal I}_2}{d\alpha}$ are given in (\ref{thm2_pf8}) and (\ref{thm2_pf9}), where, in (\ref{thm2_pf8}) and (\ref{thm2_pf9}), $n_1$ and $n_2$ are realizations of $N_1$ and $N_2$ respectively. Let $\alpha^{\prime\prime}= \frac{P_1}{P_1+P_2}$. Now, substitute $n_1^\prime=n_1/\sqrt{\alpha}$ in (\ref{thm2_pf8}) and $n_2^\prime=n_2/\sqrt{(1-\alpha)}$ in (\ref{thm2_pf9}). After this substitution, it can be easily seen that, at $\alpha=\alpha^{\prime\prime}= \frac{P_1}{P_1+P_2}$, ${\cal I}_1^\prime={\cal I}_2^\prime$ and the first term in in (\ref{thm2_pf8}) and the first term in (\ref{thm2_pf9}) are equal but for the sign. Hence, at $\alpha=\alpha^{\prime\prime}$, (\ref{thm2_pf5})-(\ref{thm2_pf7}) are satisfied. To, prove that $\alpha^{\prime\prime}$ = $\alpha_{opt}$, we need to show that the sum-rate $R_1+R_2$, achievable with FDMA, is a concave function of $\alpha \in (0,1)$, for which, it is enough to show that there exists a point on the FDMA rate curve in the ($R_1,R_2$) plane which achieves a greater sum rate than is achieved at a point on the line joining any two points on the curve. At this point where the sum rate is greater the sum rate achieved at a point on the line joining any two given points on the curve, the value of $\alpha$ must lie between the values of $\alpha$ at the given points. Let the points $A$ and $B$ lie on the FDMA curve in the ($R_1,R_2$) plane and let their co-ordinates be ($R_1^1,R_2^1$) and ($R_1^2,R_2^2$) respectively. Also, let the bandwidth-sharing parameter, $\alpha$, at the points $A$ and $B$ be $\alpha_1$ and $\alpha_2$ ($0<\alpha_1,\alpha_2<1$) respectively. The points ($R_1^1,R_2^1$) and ($R_1^2,R_2^2$) are defined by their respective expressions similar to the ones in (\ref{thm2_pf1}) and (\ref{thm2_pf2}). Let $W_1^1=\alpha_1 W$, $W_2^1=(1-\alpha_1) W$, $W_1^2=\alpha_2 W$, $W_2^2=(1-\alpha_2) W$ and also, define $f_1(\frac{1}{\alpha_i})$ and $f_2(\frac{1}{1-\alpha_i})$ ($i$=$1,2$) as in (\ref{thm2_pf10}) and (\ref{thm2_pf11}) respectively.
\begin{figure*}
\scriptsize
\begin{align}
\label{thm2_pf10}
&\frac{R_1^i}{\alpha_iW}=I_{{W_1}^i}\left(\sqrt{P_1}X_1;Y_1|\sqrt{P_2}X_2\right) \triangleq f_1\left(\frac{1}{\alpha_i}\right)\\ 
\label{thm2_pf11}
&\frac{R_2^i}{(1-\alpha_i)W}=I_{{W_2}^i}\left(\sqrt{P_2}X_2;Y_2|\sqrt{P_1}X_1\right) \triangleq f_2\left(\frac{1}{1-\alpha_i}\right)
\end{align}
\hrule
\end{figure*}
\begin{figure*}
\scriptsize
\begin{align}
\label{thm2_pf12}
R_1^{\prime\prime}&=\beta R_{1}^1+(1-\beta)R_{1}^2 =\alpha^\prime\left[\frac{\beta}{\alpha^\prime} R_{1}^1+\frac{(1-\beta)}{\alpha^\prime}R_{1}^2\right]\\
\label{thm2_pf13}
&=W\alpha^\prime\left[\frac{\beta}{\alpha^\prime} \alpha_1 f_1\left(\frac{1}{\alpha_1}\right) + \frac{(1-\beta)}{\alpha^\prime} \alpha_2 f_1\left(\frac{1}{\alpha_2}\right)\right]\\
\label{thm2_pf14}
& < W \alpha^{\prime} \left[ f_1\left(\frac{\beta+(1-\beta)}{\alpha^\prime}\right)\right] = W\alpha^\prime I_{{W_1}^\prime}\left(\sqrt{P_1}X_1;Y_1|\sqrt{P_2}X_2\right)\\
\label{thm2_pf15}
R_2^{\prime\prime}&=\beta R_{2}^1+(1-\beta)R_{2}^2<W(1-\alpha^\prime) I_{{W_2}^\prime}\left(\sqrt{P_2}X_2;Y_2|\sqrt{P_1}X_1\right)
\end{align}
\hrule
\end{figure*}
To achieve a point on the line joining the points $A$ and $B$, we need to time-share between the points $A$ and $B$, for a fraction of time $\beta$ and ($1-\beta$) ($0<\beta<1$) respectively. Now, let, $\beta\alpha_1+(1-\beta)\alpha_2 = \alpha^\prime$, $\beta(1-\alpha_1)+(1-\beta)(1-\alpha_2)=(1-\alpha^\prime)$, ${W_1}^\prime=\alpha^\prime W$, and ${W_2}^\prime=(1-\alpha^\prime) W$. The rate-pair, ($R_1^{\prime\prime},R_2^{\prime\prime}$), achieved by time-sharing between the points $A$ and $B$ is given in (\ref{thm2_pf12}) and (\ref{thm2_pf15}). Equation (\ref{thm2_pf14}) follows from the fact that $f_1$ is a concave function of $1/\alpha$ and, so, we apply Jensen's inequality in (\ref{thm2_pf13}) to arrive at (\ref{thm2_pf14}). Similarly, we arrive at (\ref{thm2_pf15}). Equations (\ref{thm2_pf14}) and (\ref{thm2_pf15}) imply that there exists a point on the FDMA curve in the ($R_1,R_2$) plane which achieves a greater sum rate than is achieved on the line joining the two points ($A$,$B$) on the curve and $\alpha^\prime$ lies between $\alpha_1$ and $\alpha_2$.
Hence, $\alpha^{\prime\prime}=\frac{P_1}{P_1+P_2}$ is the required optimum $\alpha$, i.e. $\alpha_{opt}$.
\end{proof}

\begin{theorem}
 \label{thm3}
For the GIC model in Fig. \ref{fig:AWGNIC}, the value of $\alpha$ that would maximize the sum rate for FDMA, in the Gaussian alphabet case, is equal to $\frac{P_1}{P_1+P_2}$.
\end{theorem}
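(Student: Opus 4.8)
The plan is to work directly with the closed-form Gaussian FDMA rates in (\ref{eqnset16a})--(\ref{eqnset16b}). With $W_1=\alpha W$ and $W_2=(1-\alpha)W$, the quantity to be maximized over $\alpha\in(0,1)$ is
\begin{align}
\nonumber
S(\alpha)=\alpha W\log\left(1+\frac{P_1}{\alpha W}\right)+(1-\alpha)W\log\left(1+\frac{P_2}{(1-\alpha)W}\right).
\end{align}
First I would differentiate $S$ with respect to $\alpha$ and set $S'(\alpha)=0$. Carrying the factor $\log e$ that arises from differentiating the base-$2$ logarithms (as in (\ref{thm2_pf1})), the $W$ prefactors cancel and the stationarity condition reduces to
\begin{align}
\nonumber
\log\left(1+\frac{P_1}{\alpha W}\right)-\log e\,\frac{P_1}{\alpha W+P_1}=\log\left(1+\frac{P_2}{(1-\alpha)W}\right)-\log e\,\frac{P_2}{(1-\alpha)W+P_2}.
\end{align}

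Next I would substitute the candidate $\alpha=\frac{P_1}{P_1+P_2}$. Then $\alpha W=\frac{P_1W}{P_1+P_2}$ and $(1-\alpha)W=\frac{P_2W}{P_1+P_2}$, so that $\frac{P_1}{\alpha W}=\frac{P_2}{(1-\alpha)W}=\frac{P_1+P_2}{W}$ and $\frac{P_1}{\alpha W+P_1}=\frac{P_2}{(1-\alpha)W+P_2}=\frac{P_1+P_2}{W+P_1+P_2}$. Hence both logarithmic terms and both rational terms on the two sides of the stationarity condition coincide, and the condition is satisfied; thus $\alpha=\frac{P_1}{P_1+P_2}$ is an interior stationary point of $S$.

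Finally, to upgrade this stationary point to the global maximizer I would establish that $S$ is concave on $(0,1)$. The cleanest route is to note that $W_1\mapsto W_1\log\left(1+\frac{P_1}{W_1}\right)$ is the perspective of the concave map $t\mapsto\log(1+t)$ and is therefore concave, and that $W_1=\alpha W$ is affine in $\alpha$; the same holds for the second summand in $(1-\alpha)$, so $S$ is a sum of concave functions of $\alpha$ and its unique interior stationary point is the global maximum. Equivalently, and to keep the argument parallel to Theorem~\ref{thm2}, I could reuse the time-sharing/Jensen argument of (\ref{thm2_pf10})--(\ref{thm2_pf15}) verbatim, with $f_1,f_2$ now the Gaussian spectral-efficiency functions $\log\left(1+(P_1/W)/\alpha\right)$ and $\log\left(1+(P_2/W)/(1-\alpha)\right)$, which are concave in $1/\alpha$ and $1/(1-\alpha)$ respectively. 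The main obstacle is precisely this global-optimality step: verifying the first-order condition is routine algebra, whereas concavity is what rules out spurious stationary points and endpoint maxima; the only bookkeeping subtlety is carrying the $\log e$ terms from the base-$2$ differentiation correctly, and these cancel by the symmetry of the candidate point.
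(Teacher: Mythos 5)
Your proposal is correct and follows essentially the same two-step structure as the paper's proof: verify that $\alpha=\frac{P_1}{P_1+P_2}$ satisfies the first-order stationarity condition (the paper's (\ref{thm3_pf3}), modulo the $\log e$ factor that you carry explicitly), and then upgrade the stationary point to the global maximizer by showing the sum rate is concave in $\alpha$. The only difference is in how concavity is justified---the paper does it by the explicit time-sharing/Jensen chord argument of (\ref{thm3_pf6})--(\ref{thm3_pf9}), while your primary route invokes the standard fact that $W_1\mapsto W_1\log\left(1+\frac{P_1}{W_1}\right)$ is the perspective of the concave map $t\mapsto\log(1+t)$ composed with an affine function of $\alpha$; this is a cleaner packaging of the same lemma, and you correctly note that the paper's own argument could be reused verbatim.
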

\begin{proof}
The expressions for the maximum achievable rates with FDMA, in the Gaussian alphabet case, is given in  (\ref{eqnset16a}) and (\ref{eqnset16b}). Define $R_1^{c}$ and $R_2^{c}$ as given below.
\begin{align}
\label{thm3_pf1}
&R_1^{c} \triangleq \alpha W~ log\left(1+\frac{P_1}{\alpha W}\right)\\
\label{thm3_pf2}
&R_2^{c} \triangleq (1-\alpha) W~ log\left(1+\frac{P_2}{(1-\alpha) W}\right).
\end{align}
$R_1^{c}$ and $R_2^{c}$ define the points on the FDMA rate curve. It is required to find $\alpha_{opt}= arg~ \max_{\alpha \in (0,1)} \left(R_1^{c}+R_2^{c} \right)$. Therefore, at $\alpha=\alpha_{opt}$, (\ref{thm3_pf3}) (given at the top of the next page) is satisfied.
\begin{figure*}
\scriptsize
\begin{align}
\label{thm3_pf3}
\frac{d}{d \alpha} \left(R_1^{c}+R_2^{c}\right)=0 \Rightarrow Wlog\left(1+\frac{P_1}{\alpha W}\right) - \frac{\left(P_1/\alpha\right)}{\left(1+\frac{P_1}{\alpha W}\right)} - Wlog\left(1+\frac{P_2}{(1-\alpha) W}\right) + \frac{\left(P_2/(1-\alpha)\right)}{\left(1+\frac{P_2}{(1-\alpha) W}\right)} = 0
\end{align}
\hrule
\end{figure*}
Let $\alpha^{\prime\prime}= \frac{P_1}{P_1+P_2}$. It is easy to see that (\ref{thm3_pf3}) is satisfied at $\alpha=\alpha^{\prime\prime}$. To prove that $\alpha^{\prime\prime}$ = $\alpha_{opt}$, we need to show that the sum-rate $R_1^c+R_2^c$, achievable with FDMA, is a concave function of $\alpha \in (0,1)$, for which, it is enough to show that there exists a point on the FDMA rate curve in the ($R_1,R_2$) plane which achieves a greater sum rate than is achieved at a point on the line joining any two points on the curve. At this point where the sum rate is greater the sum rate achieved at a point on the line joining any two given points on the curve, the value of $\alpha$ must lie between the values of $\alpha$ at the given points. Let the points $A$ and $B$ lie on the FDMA curve in the ($R_1,R_2$) plane and let their co-ordinates be ($R_1^1,R_2^1$) and ($R_1^2,R_2^2$) respectively. Also, let the bandwidth-sharing parameter, $\alpha$, at the points $A$ and $B$ be $\alpha_1$ and $\alpha_2$ ($0<\alpha_1,\alpha_2<1$) respectively. The points ($R_1^1,R_2^1$) and ($R_1^2,R_2^2$) are defined by their respective expressions similar to the ones in (\ref{thm3_pf1}) and (\ref{thm3_pf2}).  Let $W_1^1=\alpha_1 W$, $W_2^1=(1-\alpha_1) W$, $W_1^2=\alpha_2 W$, $W_2^2=(1-\alpha_2) W$ and also, define $f_1(\frac{1}{\alpha_i})$ and $f_2(\frac{1}{1-\alpha_i})$ ($i$=$1,2$) as in (\ref{thm3_pf4}) and (\ref{thm3_pf5}) respectively.
\begin{figure*}
\scriptsize
\begin{align}
\label{thm3_pf4}
&\frac{R_1^i}{\alpha_iW}= log\left(1+\frac{P_1}{\alpha_i W}\right) \triangleq f_1\left(\frac{1}{\alpha_i}\right)\\ 
\label{thm3_pf5}
&\frac{R_2^i}{(1-\alpha_i)W}=log\left(1+\frac{P_2}{(1-\alpha_i) W}\right) \triangleq f_2\left(\frac{1}{1-\alpha_i}\right)
\end{align}
\hrule
\end{figure*}
\begin{figure*}
\scriptsize
\begin{align}
\label{thm3_pf6}
R_1^{\prime\prime}&=\beta R_{1}^1+(1-\beta)R_{1}^2 =\alpha^\prime\left[\frac{\beta}{\alpha^\prime} R_{1}^1+\frac{(1-\beta)}{\alpha^\prime}R_{1}^2\right]\\
\label{thm3_pf7}
&=W\alpha^\prime\left[\frac{\beta}{\alpha^\prime} \alpha_1 f_1\left(\frac{1}{\alpha_1}\right) + \frac{(1-\beta)}{\alpha^\prime} \alpha_2 f_1\left(\frac{1}{\alpha_2}\right)\right]\\
\label{thm3_pf8}
& < W \alpha^{\prime} \left[ f_1\left(\frac{\beta+(1-\beta)}{\alpha^\prime}\right)\right] = W\alpha^\prime log\left(1+\frac{P_1}{\alpha^\prime W}\right)\\
\label{thm3_pf9}
R_2^{\prime\prime}&=\beta R_{2}^1+(1-\beta)R_{2}^2<W(1-\alpha^\prime) log\left(1+\frac{P_2}{(1-\alpha^\prime) W}\right)
\end{align}
\hrule
\end{figure*}
To achieve a point on the line joining the points $A$ and $B$, we need to time-share between the points $A$ and $B$, for a fraction of time $\beta$ and ($1-\beta$) ($0<\beta<1$) respectively. Now, let, $\beta\alpha_1+(1-\beta)\alpha_2 = \alpha^\prime$, $\beta(1-\alpha_1)+(1-\beta)(1-\alpha_2)=(1-\alpha^\prime)$, ${W_1}^\prime=\alpha^\prime W$, and ${W_2}^\prime=(1-\alpha^\prime) W$. The rate-pair, ($R_1^{\prime\prime},R_2^{\prime\prime}$), achieved by time-sharing between the points $A$ and $B$ is given in (\ref{thm3_pf6}) and (\ref{thm3_pf9}). Equation (\ref{thm3_pf8}) follows from the fact that $f_1$ is a concave function of $1/\alpha$ and, so, we apply Jensen's inequality in (\ref{thm3_pf7}) to arrive at (\ref{thm3_pf8}). Similarly, we arrive at (\ref{thm2_pf9}). Equations (\ref{thm3_pf8}) and (\ref{thm3_pf9}) imply that there exists a point on the FDMA curve in the ($R_1,R_2$) plane which achieves a greater sum rate than is achieved on the line joining the two points ($A$,$B$) on the curve and $\alpha^\prime$ lies between $\alpha_1$ and $\alpha_2$.
Hence, $\alpha^{\prime\prime}=\frac{P_1}{P_1+P_2}$ is the required optimum $\alpha$, i.e. $\alpha_{opt}$.
\end{proof}

We characterize the behaviour of finite constellation FDMA under strong-interference and weak-interference in the following two subsections.

\subsection{Finite Constellation FDMA in Strong-Interference Channel}
For $|h_{12}|$=$|h_{21}|$=$1$, it is easy to see from (\ref{eqnset13c}), (\ref{eqnset16a}) and (\ref{eqnset16b}) that the FDMA rate curve using Gaussian alphabet will touch the capacity curve at $\alpha=\alpha_{opt}=\frac{P_1}{P_1+P_2}$. But with finite constellation, it is not clear from (\ref{eqnset12c}) and (\ref{eqnset15}) whether, at $\alpha_{opt}$, the FDMA rate point will lie on the CC capacity curve or not. So, we need to plot it for some cases and observe the behaviour.
\begin{figure}[htbp]
\subfigure[$W=6$ $Hz$] {\label{fig:7b}\includegraphics[totalheight=2.5in,width=3.5in]{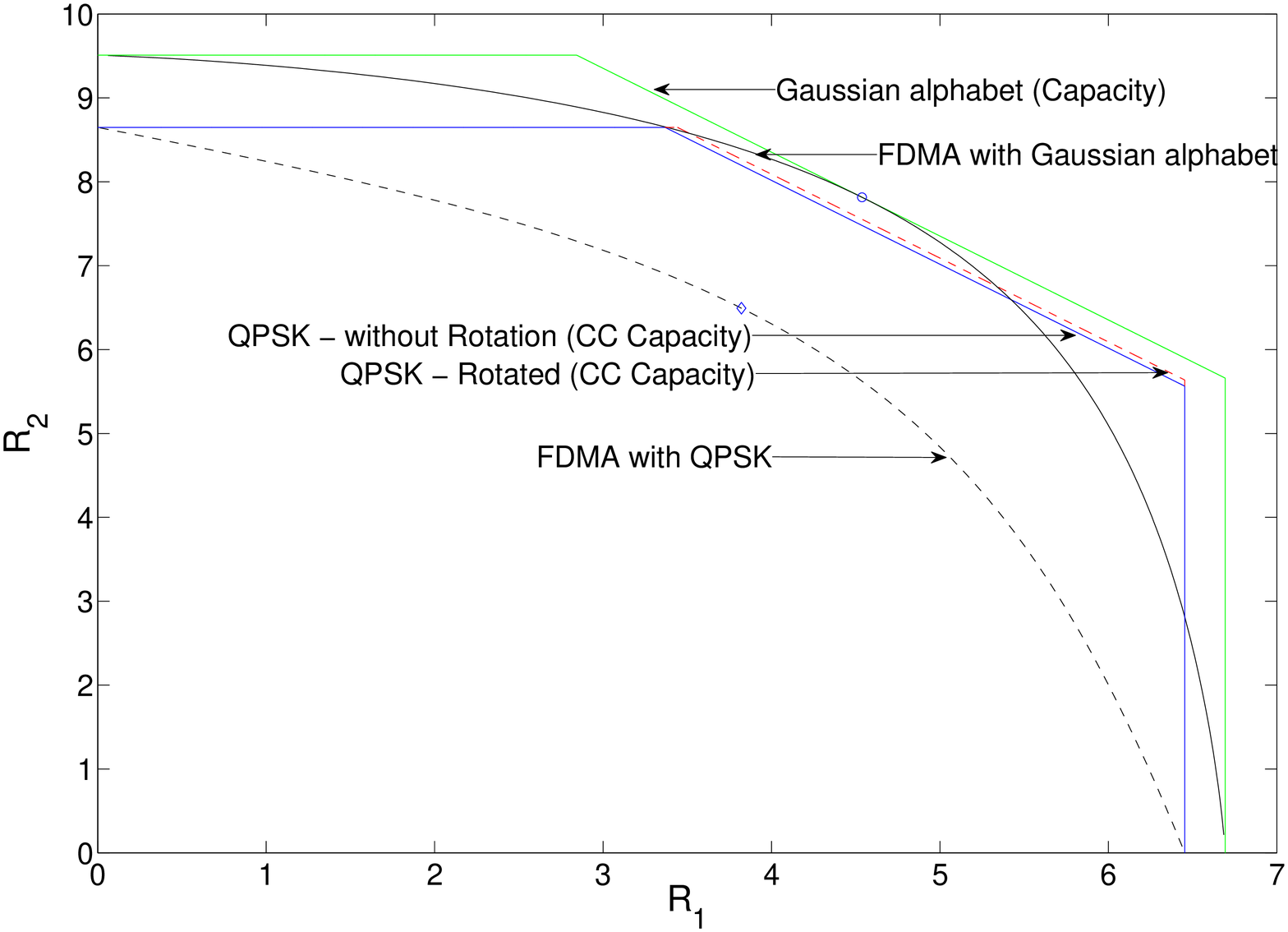}}
\subfigure[$W=2$ $Hz$] {\label{fig:7a}\includegraphics[totalheight=2.5in,width=3.5in]{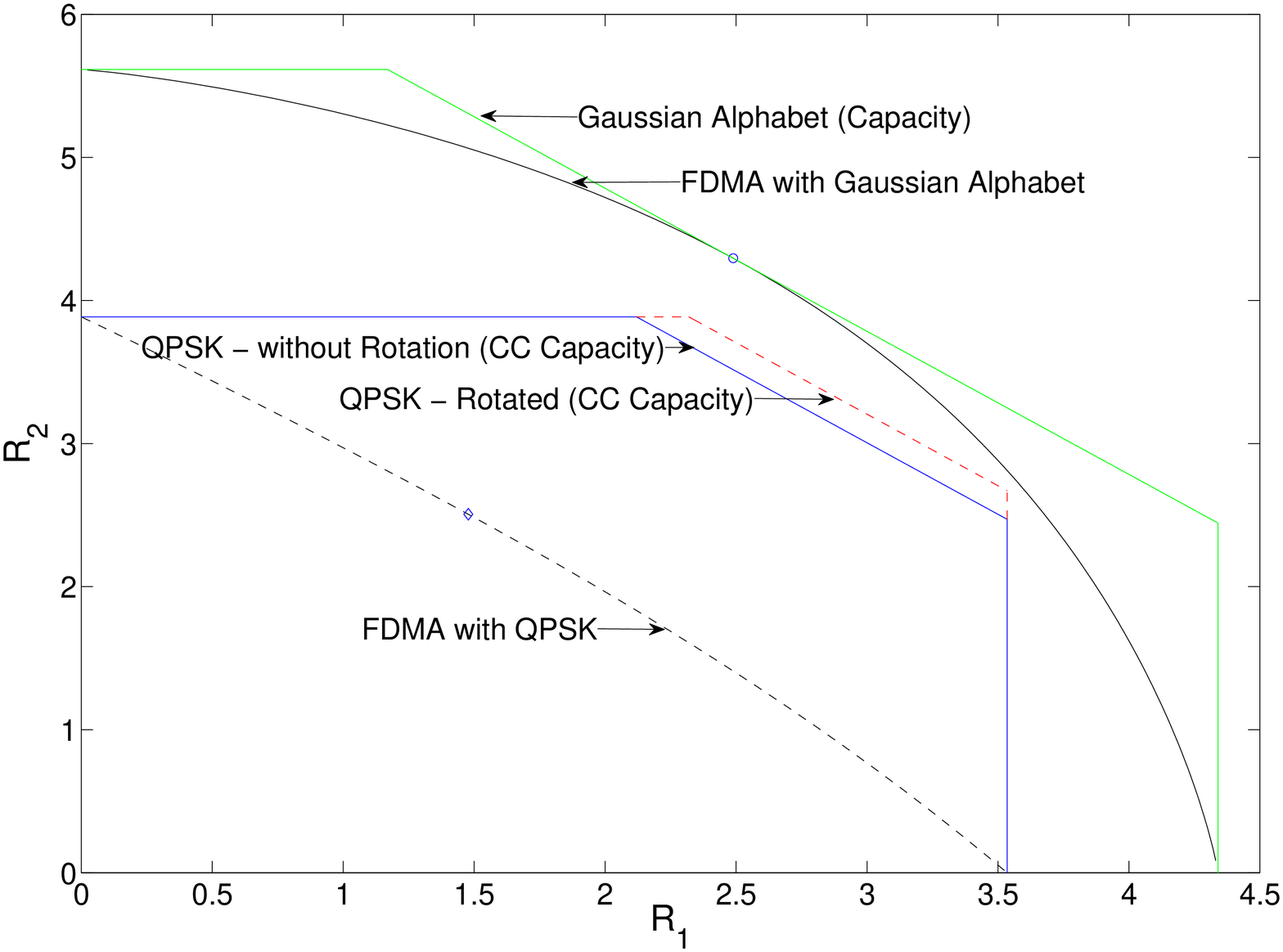}}\\
\caption{FDMA and Capacity Region for QPSK pair at $P_1$=$7$ Watt (=$8.45dB$), $P_2$=$12$ Watt (=$10.79dB$), $h_{12}$=$1\angle10^\circ$, $h_{21}$=$1\angle20^\circ$.}
\label{fig:fig7}
\end{figure}
Rate pairs achieved by FDMA with Gaussian alphabets and FDMA with QPSK alphabets are shown in Fig. \ref{fig:fig7}. Fig. \ref{fig:fig7} represents a case when $|h_{12}|$=$|h_{21}|$=$1$. Since rotation offers increase in the CC capacity, from now on, we consider only the rotated version of the signal set. As seen in Fig. \ref{fig:fig7}, the FDMA rate curve does not touch the CC capacity curve (rotated version) and it moves away from it with decreasing $W$. We can consider, without loss of generality,  the power constraint for User-$i$, for the full bandwidth case, as $\frac{P_i}{W}$ ($i$=$1,2$) and the noise variances as 1 by dividing (\ref{eqnset9}) by $\sqrt W$ and similarly for the FDMA case we take the power constraints to be $\frac{P_i}{W_i}$ ($i$=1,2). The same effect of decreasing $W$ is observed by increasing both $P_1$ and $P_2$ with the same factor by which $W$ is decreased. Note that $\alpha_{opt}$ remains the same when $P_1$ and $P_2$ are increased by the same factor. The reason why the FDMA rate curve goes away from the CC capacity curve by increasing both $P_1$ and $P_2$ by the same factor is given below.

$I_W\left(\sqrt{P_1}X_1,\sqrt{P_2}X_2;Y_1\right)$ and $I_W\left(\sqrt{P_1}X_1,\sqrt{P_2}X_2;Y_2\right)$, the CC capacities of the $16$-point constellations ${\cal S}_{sum_1}$ and ${\cal S}_{sum_2}$ respectively, both of which have an effective average power of $\frac {(P_1+P_2)}{W}$, have to saturate at $4$ bits while both $I_{W_1}\left(\sqrt{P_1}X_1;Y_1|\sqrt{P_2}X_2\right)$ and $I_{W_2}\left(\sqrt{P_2}X_2;Y_2|\sqrt{P_1}X_1\right)$, the CC capacities of $4$-point constellations which also have effective average powers of $\frac {(P_1+P_2)}{W}$ (as they are evaluated at $\alpha=\alpha_{opt}$), have to saturate at $2$ bits when $P_1$ and $P_2$ are increased by the same factor. So, $I_W\left(\sqrt{P_1}X_1,\sqrt{P_2}X_2;Y_1\right)$ and $I_W\left(\sqrt{P_1}X_1,\sqrt{P_2}X_2;Y_2\right)$ increase at a faster rate than $I_{W_1}\left(\sqrt{P_1}X_1;Y_1|\sqrt{P_2}X_2\right)$ and $I_{W_2}\left(\sqrt{P_2}X_2;Y_2|\sqrt{P_1}X_1\right)$. Hence, the difference, normalized with respect to $W$,
\begin{align}
\nonumber 
&\frac{1}{W}\left[\min\left\{WI_W\left(\sqrt{P_1}X_1,\sqrt{P_2}X_2;Y_1\right),\hspace{30cm}\right\}\right]\\
\nonumber
&\hspace{-30cm} \left \{ \hspace{32cm} WI_W\left(\sqrt{P_1}X_1,\sqrt{P_2}X_2;Y_2\right)\right\}-\\
\nonumber
&\hspace{1cm}\left(W_1I_{W_1}\left(\sqrt{P_1}X_1;Y_1|\sqrt{P_2}X_2\right)+ \hspace{30cm} \right)\\
\nonumber
&\hspace{-30cm}\left[ \left(\hspace{31cm} W_2I_{W_2}\left(\sqrt{P_2}X_2;Y_2|\sqrt{P_1}X_1\right)\right)\right]
\end{align}
evaluated at $\alpha_{opt}$, increases by increasing ($P_1$,$P_2$) by the same factor or decreasing $W$. 

The argument with regards to the constellation-constrained FDMA rate curve moving away from the CC capacity curve with decrease in $W$ holds good for constellations with arbitrary size and arbitrary complex values of $h_{12}$ and $h_{21}$, with $|h_{12}|=|h_{12}|=1$. Hence, at a given finite $W$, for the finite constellation case, the FDMA rate curve, under constellation constraints, does not touch the CC capacity curve. But the difference between the optimum FDMA sum-rate and the CC sum-capacity, for a given value of channel gains, will depend on the constellation size.

When either $|h_{12}|$=$1$ and $|h_{21}|>1$ or $|h_{12}|>1$ and $|h_{21}|$=$1$, it is easily seen from (\ref{eqnset13c}), (\ref{eqnset16a}) and (\ref{eqnset16b}), that for the Gaussian alphabet case, the FDMA rate curve will touch the capacity curve at $\alpha=\alpha_{opt}=\frac{P_1}{P_1+P_2}$. But, for the finite alphabet case, it is not clear again from (\ref{eqnset12c}) and (\ref{eqnset15}) whether, at $\alpha_{opt}$, the FDMA rate point will lie on the CC capacity curve or not. Fig. \ref{fig:fig10} is representative of the case when $|h_{12}|$=$1$ and $|h_{21}|>1$. In Fig. \ref{fig:fig10}, the FDMA rate curve with constellation constraints strictly lies within the CC capacity curve. The behaviour with decreasing $W$ is the same as for the case when $|h_{12}|$=$1$ and $|h_{21}|$=$1$. The reason for this is the same as stated for $|h_{12}|$=$1$ and $|h_{21}|$=$1$ except that only one of the sum-constellations (${\cal S}_{sum_1}$,${\cal S}_{sum_2}$) will have an average power of $\frac{P_1+P_2}{W}$ and that will dominate the CC capacity. Hence, for $|h_{12}|$=$1$ and $|h_{21}|>1$, under constellation constraints, the FDMA rate curve lies strictly within the CC capacity curve. The results for $|h_{12}|$=$1$ and $|h_{21}|>1$ are applicable to $|h_{12}|>1$ and $|h_{21}|$=$1$ also.
\begin{figure}
\subfigure[$W=6$ $Hz$] {\label{fig:10a} \includegraphics[totalheight=2.5in,width=3.5in]{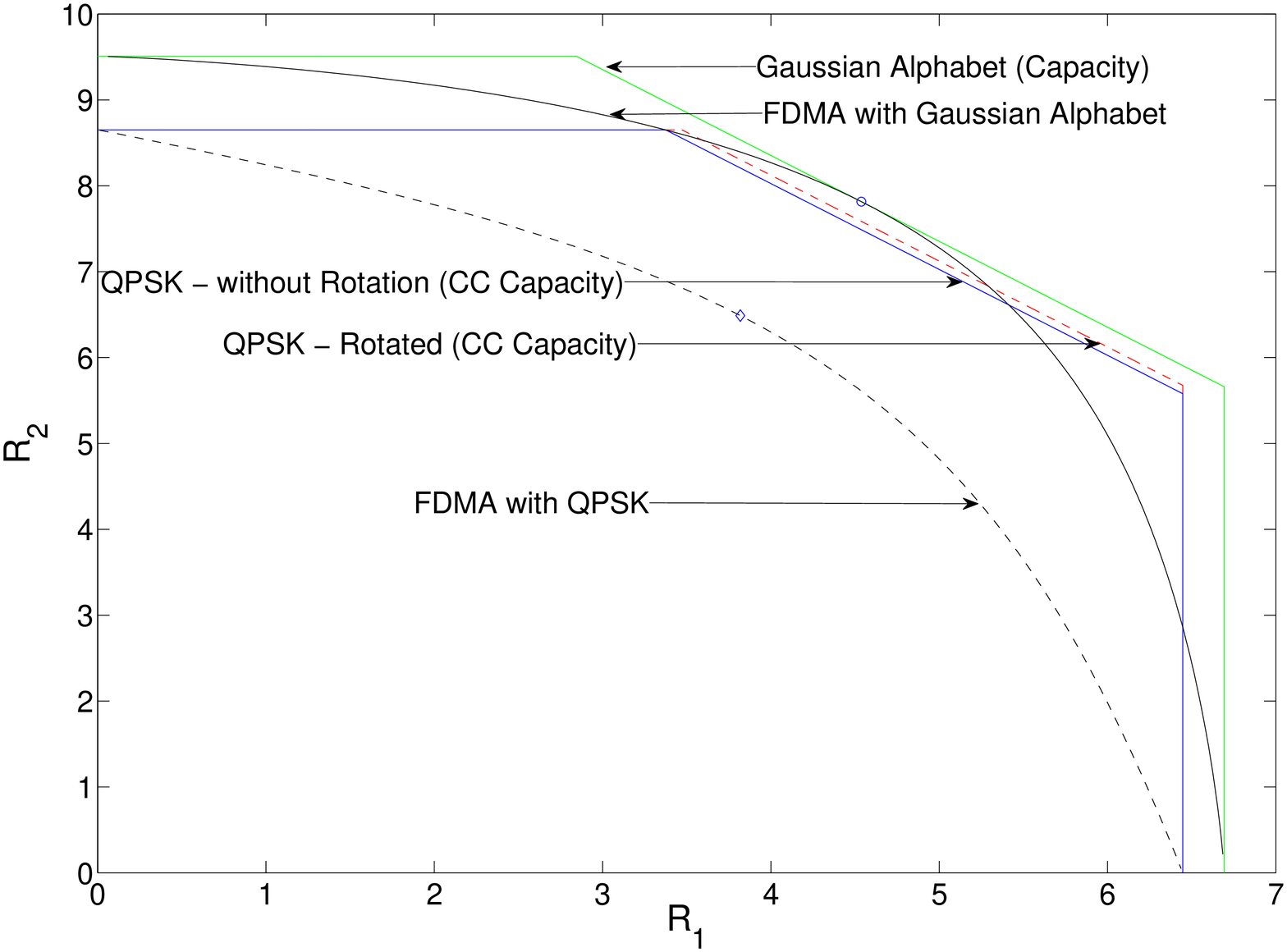}}
\subfigure[$W=2$ $Hz$] {\label{fig:10b} \includegraphics[totalheight=2.5in,width=3.5in]{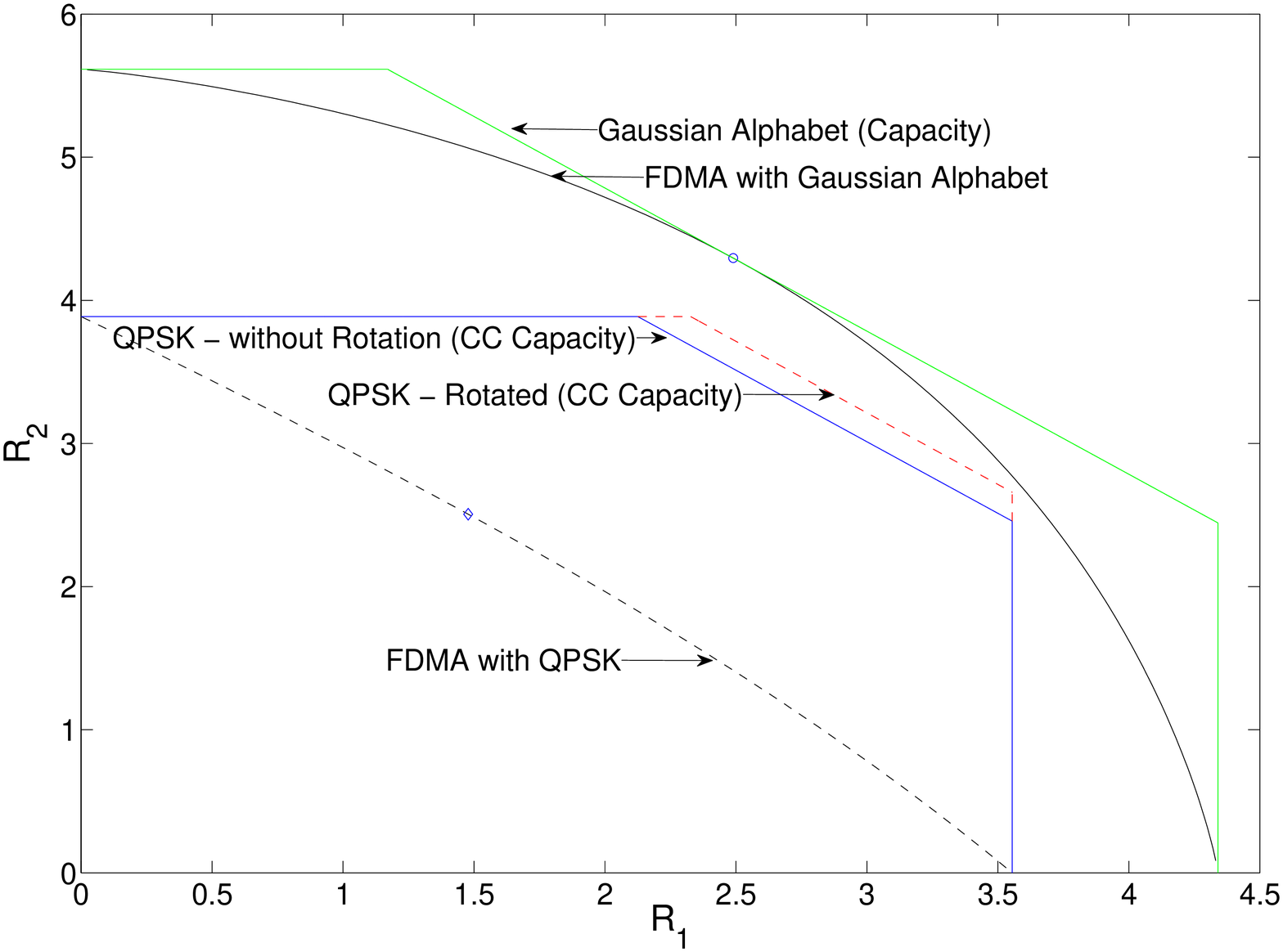}}
\caption{FDMA and capacity curves for QPSK pair at $P_1$=$7$ Watt (=$8.45dB$), $P_2$=$12$ Watt (=$10.79dB$), $h_{12}$=$1\angle10^\circ$, $h_{21}$=$1.1\angle20^\circ$.}
\label{fig:fig10}
\end{figure}

When $|h_{12}|>1$ and $|h_{21}|>1$, as represented by Fig. \ref{fig:fig11}, the FDMA rate curve with Gaussian alphabet doesn't touch the capacity curve (as indicated in \cite{GaY}) which is obvious from (\ref{eqnset13c}), (\ref{eqnset16a}) and (\ref{eqnset16b}). For the finite constellation case too, the FDMA rate curve doesn't touch the CC capacity curve which is also implied by the result that, at $|h_{12}|=1$ and $|h_{21}|=1$, the FDMA rate curve doesn't touch the CC capacity curve.
\begin{figure}
\includegraphics[totalheight=2.5in,width=3.5in]{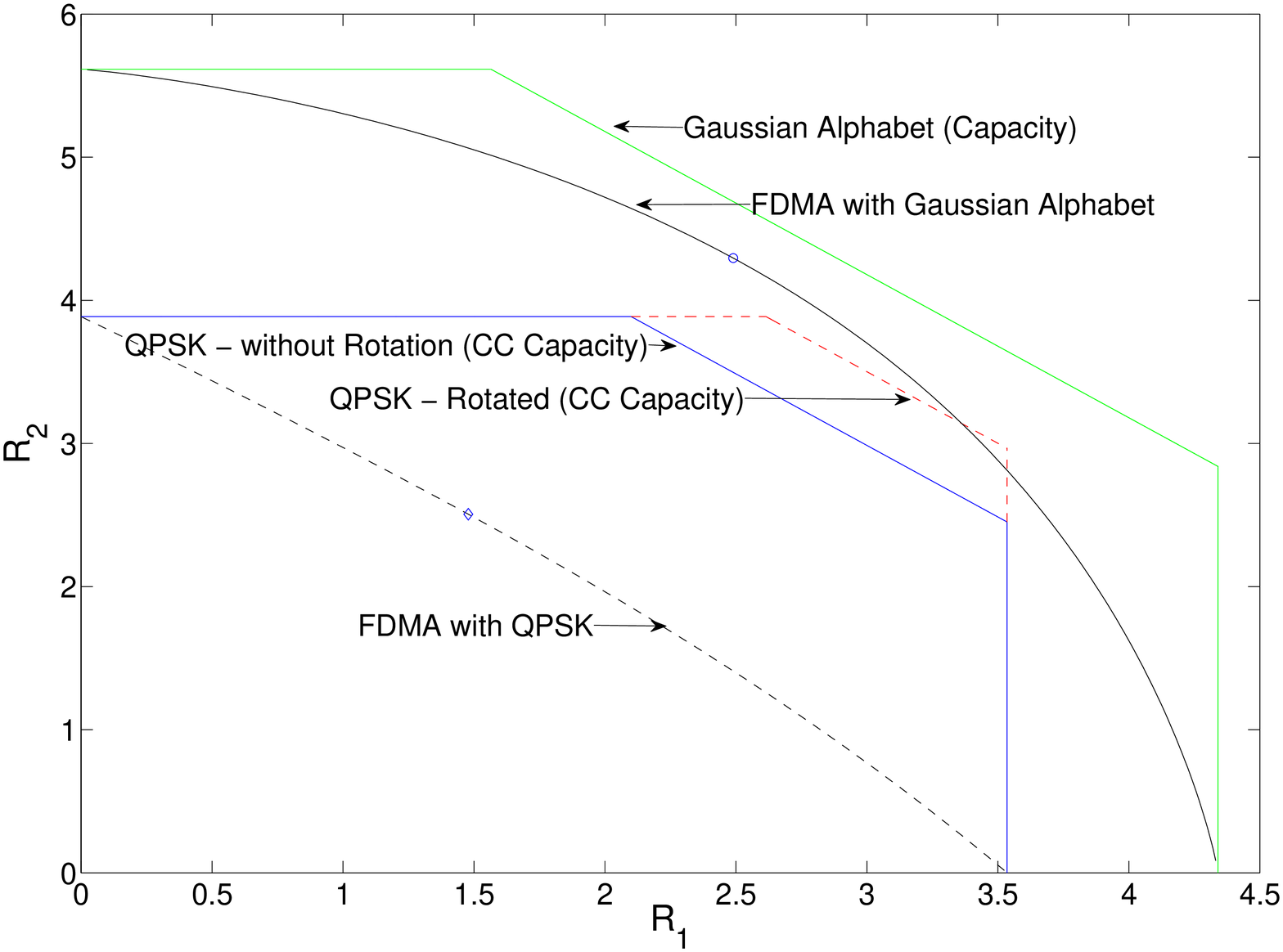}
\caption{FDMA and capacity curves for QPSK pair at $P_1$=$7$ Watt (=$8.45dB$), $P_2$=$12$ Watt (=$10.79dB$), $h_{12}$=$1.2\angle10^\circ$, $h_{21}$=$1.2\angle20^\circ$, $W$=$2$ $Hz$.}
\label{fig:fig11}
\end{figure}

Hence, when $|h_{12}|$=$|h_{21}|$=$1$, $|h_{12}|$=$1$ and $|h_{21}|>1$, and, $|h_{12}|>1$ and $|h_{21}|=1$, the Gaussian alphabet FDMA rate curve will touch the capacity curve while the finite constellation FDMA rate curve will never touch the CC capacity curve in the strong-interference regime.

%
\subsection{Finite Constellation FDMA in Weak-Interference Channel}
When either $|h_{12}|$ or $|h_{21}|$ or both are less than $1$, (\ref{eqnset12a})-(\ref{eqnset13c}) and (\ref{eqnset14a})-(\ref{eqnset16b}) are just inner bounds (i.e. achievable regions). From (\ref{eqnset12a})-(\ref{eqnset12c}), it is seen that the simultaneous-decoding inner-bound for the finite constellation case is enlarged by relative rotation of the finite input constellations. It is clear from (\ref{eqnset13c}), (\ref{eqnset16a}) and (\ref{eqnset16b}) that, for the Gaussian alphabet case, when $|h_{12}|$ or $|h_{21}|$ or both are less than $1$, the FDMA inner-bound, at $\alpha_{opt}$, is always better than the simultaneous-decoding inner-bound.
\begin{figure}[htbp]
\centering
\includegraphics[totalheight=2.5in,width=3.5in]{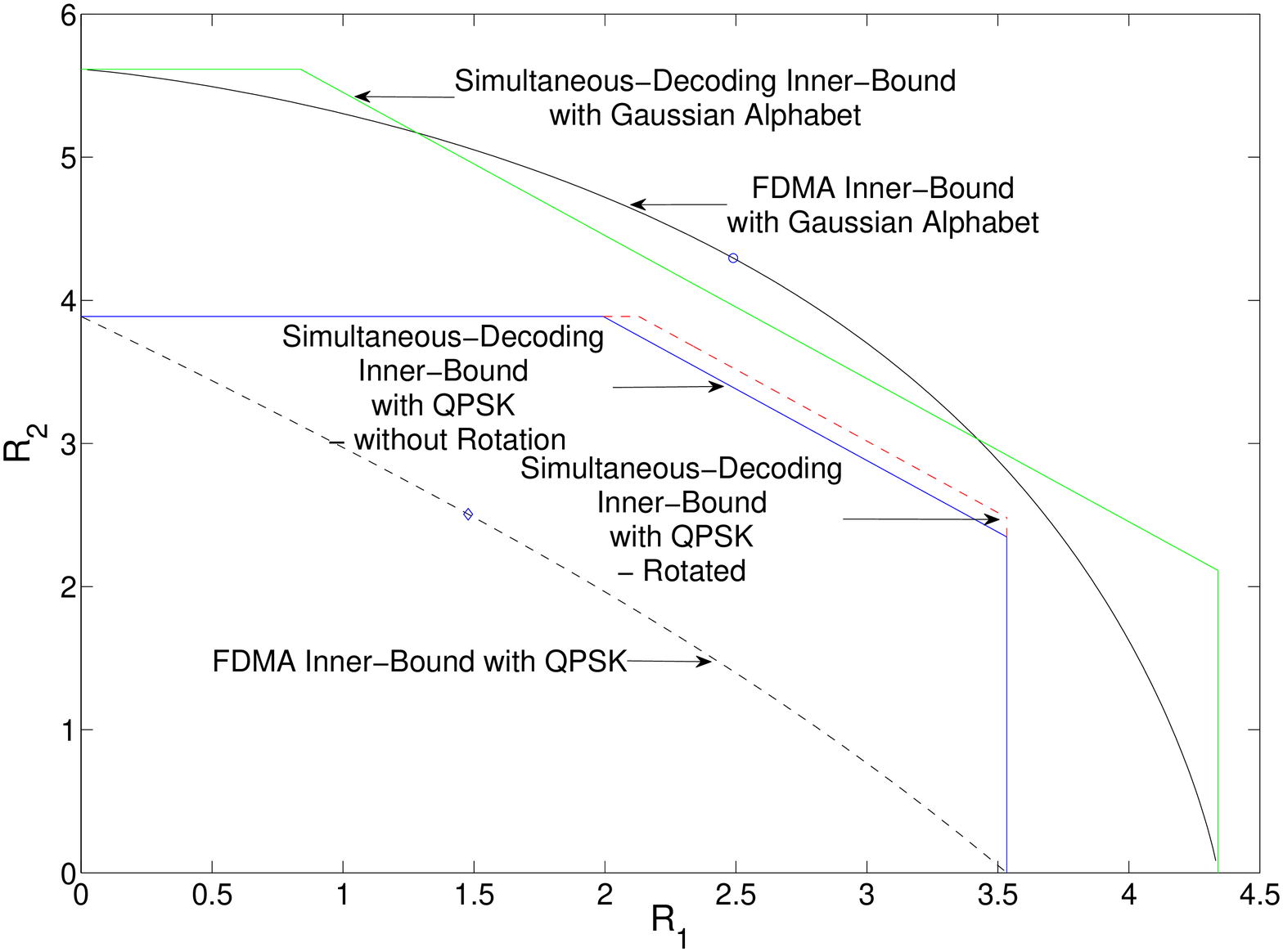}
\caption{FDMA inner-bound and simultaneous-decoding inner-bound for QPSK pair at $P_1$=$7$ Watt (=$8.45dB$), $P_2$=$12$ Watt (=$10.79dB$), $h_{12}$=$1\angle10^\circ$, $h_{21}$=$0.9\angle20^\circ$, $W$=$2$ $Hz$.}
\label{fig:fig8}
\end{figure}
One interesting observation that can be made from Fig. \ref{fig:fig8} is that, for the finite constellation case, the simultaneous-decoding inner-bound still remains strictly better than the FDMA inner-bound. Hence, under weak-interference, when  $|h_{12}|$ and $|h_{21}|$ are close to $1$, the simultaneous-decoding inner-bound outperforms the FDMA inner-bound, at $\alpha_{opt}$, for the finite constellation case, unlike the Gaussian alphabet case. However, under constellation constraints, the values of cross channel gains at which the FDMA inner-bound, at $\alpha_{opt}$, outperforms the simultaneous-decoding inner-bound depends on the constellations used. One instance of the FDMA inner-bound, at $\alpha_{opt}$, outperforming the simultaneous-decoding inner-bound, under constellation constraints, is shown in Fig. \ref{fig:fig9}.
\begin{figure}
\includegraphics[totalheight=2.5in,width=3.5in]{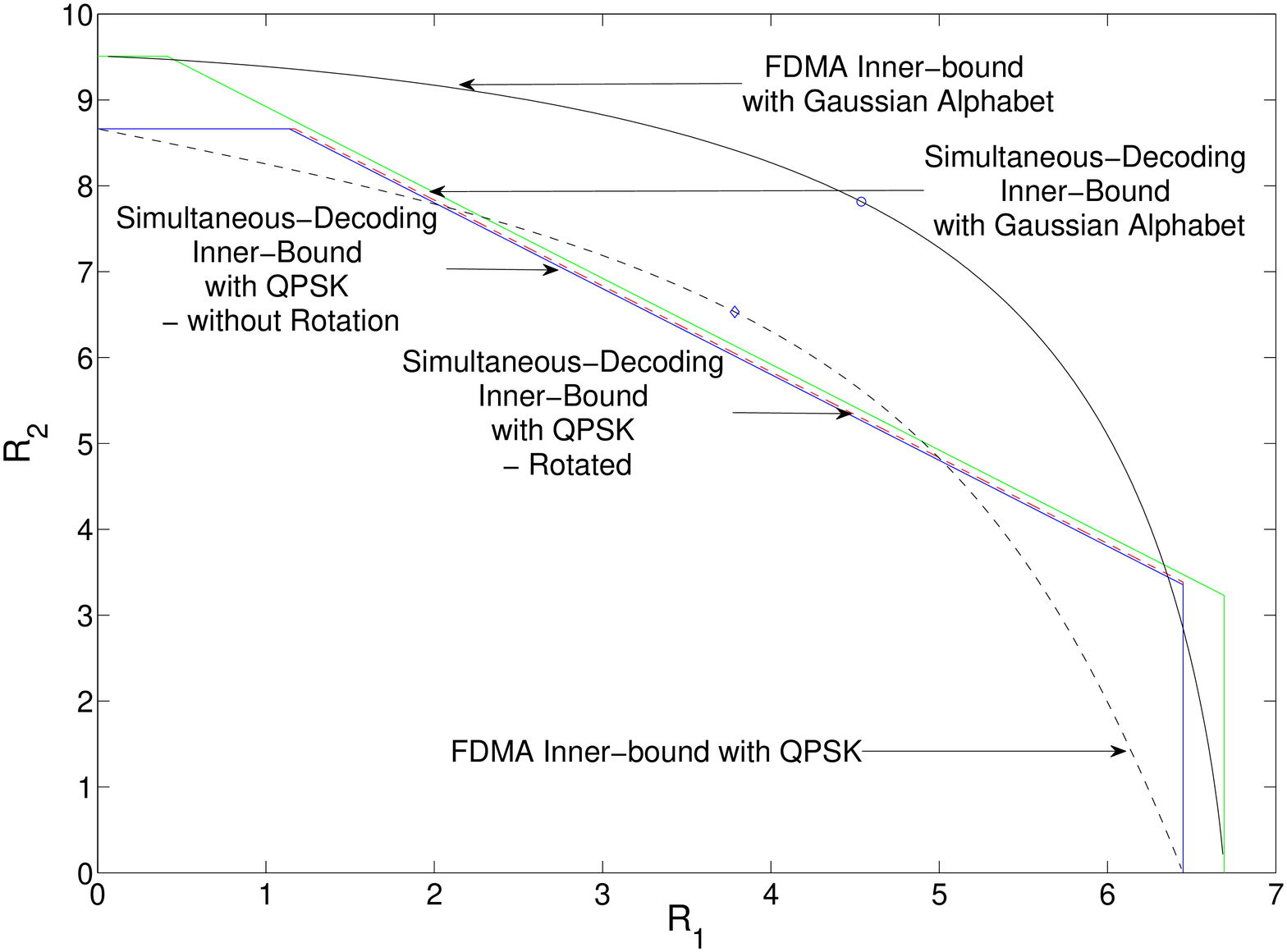}
\caption{FDMA inner-bound and simultaneous-decoding inner-bound for QPSK pair at $P_1$=$7$ Watt (=$8.45dB$), $P_2$=$12$ Watt (=$10.79dB$), $h_{12}$=$1\angle10^\circ$, $h_{21}$=$0.7\angle20^\circ$, $W$=$6$ $Hz$.}
\label{fig:fig9}
\end{figure}

\section{DISCUSSION}
We showed that throughout the strong-interference regime, with finite constellation, the FDMA rate curve never touches the CC capacity curve while for the Gaussian alphabet case, the FDMA rate curve touches the capacity curve for some portion of the strong-interference regime. This is another instance of what holds good for the Gaussian alphabet case need not hold good when finite input constellations are employed (for GMAC such results have already been shown). An interesting direction of future work lies in the weak-interference regime. For some portion of the weak-interference regime, with a symmetric channel and equal powers for both the users, using Gaussian alphabets, the inner-bound obtained from orthogonal signaling is better than the inner-bound obtained from treating interference as noise \cite{ETW}. It would be interesting to see what happens when finite input constellations are used in such a case. 

An important direction to pursue is to develop non-orthogonal multiple access schemes for interference channels which exploit the enlarged portion of the CC capacity and operate above the FDMA rate curve.

\section*{Acknowledgement}
The authors wish to thank T. Damodaram Bavirisetti for the useful discussions on the proofs to the theorems. This work was supported  partly by the DRDO-IISc program on Advanced Research in Mathematical Engineering through a research grant as well as the INAE Chair Professorship grant to B.~S.~Rajan.

\end{document}